\theoremstyle{plain}
\newtheorem{thm}{Theorem}
\theoremstyle{definition}
\newtheorem{defn}{Definition}[]
\theoremstyle{remark}
\newtheorem{rem}{Remark}
\newtheorem{prop}[defn]{Proposition}
\begin{document}

\title{On the Secrecy Unicast Throughput Performance of NOMA Assisted Multicast-Unicast Streaming With Partial Channel Information}
%
%
%


\author{Bo~Chen,
        Qingjiang~Shi,
        Yunlong~Cai, and
        Youming~Li
\thanks{B. Chen and Y. Li are with the College of Information and Electronic Engineering, Ningbo University, Ningbo 315211, China (e-mail: \{chenbo3, liyouming\}@nbu.edu.cn).\par
Q. Shi is with the School of Software Engineering \& College of Electronics and Information Engineering, Tongji University, Shanghai 201804, China (e-mail: qing.j.shi@gmail.com).\par
Y. Cai is with the Department of Information Science and Electronic Engineering, Zhejiang University, Hangzhou 310027, China (e-mail: ylcai@zju.edu.cn).
}

}

\maketitle

\begin{abstract}
This paper considers a downlink single-cell non-orthogonal multiple access (NOMA) network with uniformly deployed users, while a mixed multicast and unicast traffic scenario is taken into account. By guaranteeing the quality of service (QoS) of the multicast traffic, the multicast outage and secrecy unicast throughput performance is evaluated. In particular, two types of partial channel state information (CSI), namely the imperfect CSI and CSI based on second order statistics (SOS), are investigated in the analytical framework. For both two cases, the closed-form approximations for the multicast outage probability and secrecy unicast throughput are derived except that the approximation for the secrecy unicast throughput in SOS-based CSI only concerns two users. As to the multicast outage probability, the simulation results demonstrate that the NOMA scheme considered in both two cases achieves superior performance compared to the traditional orthogonal multiple access (OMA) scheme, while for the secrecy unicast throughput, the NOMA scheme shows great advantages over the OMA scheme in good condition (high signal-to-noise ratio (SNR)), but is inferior to the OMA scheme in poor condition (low SNR). Moreover, both two cases achieve similar performance except that the NOMA scheme with imperfect CSI obtains larger secrecy unicast throughput than that based on SOS under high SNR condition. Finally, the provided numerical results also confirm that the derived approximations of both two cases for the multicast outage probability and secrecy unicast throughput match well with the Monte Carlo simulations.
\end{abstract}

\begin{IEEEkeywords}
Non-orthogonal multiple access, mixed multicast and unicast, imperfect channel state information, second order statistics, secrecy unicast throughput.
\end{IEEEkeywords}

%
\IEEEpeerreviewmaketitle

\section{Introduction}
%
%
%
%
\IEEEPARstart{N}{on-orthogonal} multiple access (NOMA) has been recognized as a promising technology to satisfy the challenging requirements of the fifth generation (5G) wireless networks, such as high data speed, massive connectivity, and low latency\cite{Islam,Ding}. On the other hand, downlink NOMA has been introduced in 3rd generation partnership project (3GPP)-long term evolution-advance (LTE-A) systems\cite{5G}, aiming to improve the spectral efficiency. The principle behind NOMA is that multiple users' signals are superimposed at the base station (BS) with different power level, and the user decodes its own signal by removing signals of poor channel conditions' users, which is called successive interference cancellation (SIC) technique\cite{Saito}. Note that the key feature of the downlink NOMA is to take user fairness into consideration as it allocates more power to users with worse channel conditions than those with better channel conditions, which realizes an improved trade-off between user fairness and system throughput.

Based on the perfect channel state information (CSI), performance analysis of the downlink NOMA is extensively studied in\cite{Ding1,Shi,Yang,Lv,Zeng}. The outage and sum-rate performance of NOMA in a cellular downlink scenario with randomly deployed users was analyzed in \cite{Ding1}. Meanwhile, \cite{Shi} investigated the outage balancing problem while the issues of power allocation, decoding order selection, and user grouping were taken into account. As to the downlink and uplink NOMA scenarios with two users for flexible quality of service (QoS) requirements, \cite{Yang} developed a novel power allocation scheme and established the exact expressions of the outage probability and average rate. As research goes deep, by taking the cooperative communication into consideration, the closed-form expressions for the outage probability and ergodic sum rate were derived in a downlink NOMA system with cooperative half-duplex relaying\cite{Lv}. Finally, in a multiple-input multiple output (MIMO) system, \cite{Zeng} studied sum channel capacity and ergodic sum capacity performance of NOMA in detail.

However, in practice, perfect CSI at the transmitter is usually not available, because obtaining the perfect CSI consumes a significant system overhead, especially for the wireless network with a large number of users. Furthermore, one of key features towards 5G is highly mobile, leading to rapidly changing channel, which makes it greatly challenging to achieve perfect CSI as the transmitter. Therefore, assuming partial CSI, several related works have been investigated in\cite{Xu,Timotheou,Sun,Yang1} recently. In \cite{Xu}, the outage performance of the downlink NOMA was studied for the case where each user feeds back only one bit of its CSI to the BS. Based on average CSI, optimal power allocation was analyzed while fairness for the downlink users were ensured in\cite{Timotheou}. Meanwhile, \cite{Sun} solved the ergodic capacity maximization problem of a MIMO-NOMA system with second order statistical (SOS) CSI at the transmitter. Emphasize that both \cite{Timotheou} and \cite{Sun} focused on the case where user location are fixed, namely the distances and path loss are deterministic. Therefore, considering a downlink single-cell NOMA network with uniformly deployed users, an analytical framework to evaluate the outage and sum rate performance was developed in\cite{Yang1}.

Note that all the works mentioned above considered only single unicast traffic. Meanwhile, Several works have studied the downlink NOMA with a mixed multicast and unicast traffic scenario\cite{Choi,Ding2,Yang2}. In \cite{Choi}, Multicast beamforming with superposition coding (SC) was investigated for multi-resolution broadcast where both data streams of high priority (HP) and low priority (LP) were to be transmitted for a near user, while only data stream of LP was to be transmitted to a far user. \cite{Ding2} developed a novel beamforming and power allocation scheme while the unicast performance was improved and the reception reliability of the multicast was maintained. What's more, \cite{Ding2} also analyzed how the use of NOMA can prevent those multicast receivers intercepting the unicast messages and proposed the secrecy unicast rate metric. Finally, \cite{Yang2} introduced a novel NOMA unicast-multicast system, where a number of unicast users and a group of multicast users shared the same wireless resource.

However, to date few work has made performance analysis where both partial CSI condition and mixed unicast-multicast traffic are taken into account. Hence, in this paper, we consider a downlink single-cell NOMA network, where the users are uniformly distributed in a disk and the BS is located at the center. Meanwhile a mixed multicast and unicast traffic scenario is taken into account, where the BS transmits two types of data streams, one for multicast and one for unicast. Assuming partial CSI at the BS, by guaranteeing the QoS of the multicast traffic, the multicast outage and secrecy unicast throughput performance are investigated. In particular, similar to \cite{Yang1}, we consider two types of partial CSI, namely imperfect CSI and SOS-based CSI, which are defined as follows.

\begin{itemize}
\item \emph{Imperfect CSI}: According to \cite{Yang1,Yoo,Han}, a channel estimation error model is assumed, where the BS and the users estimate the channel with a priori knowledge of the variance of estimation error. More specifically, we concentrate on the minimum mean square error (MMSE) channel estimation error model\cite{Ikki,Wang,Ma}.
\item \emph{SOS-based CSI}: Only the distances between the BS and the users are known at the BS. The motivation for studying such SOS-based CSI is analyzed below. First, distance information is easy to obtain in practical communications. Second, small-scaling fading only weekly changes the large-scale fading, which means that large-scale fading is dominant in CSI\cite{Cline}.
\end{itemize}

In particular, the contribution of this paper is three-fold:
\begin{itemize}
\item We develop a novel performance analysis framework in a downlink single-cell NOMA network with uniformly deployed users, where only partial CSI is available at the BS and a mix multicast-unicast traffic is considered. We guarantee the QoS of the multicast, which means that the minimum rate of the multicast message is ensured, and the corresponding outage performance is studied. Meanwhile, since the unicast message is broadcast to all the users, the secrecy unicast throughput is also investigated, where the power allocation and unicast user selection problem is properly formulated and mathematically solved.
\item Two types of partial CSI are considered, namely imperfect CSI and SOS-based CSI. For both two cases, employing the probability theory, order statistics theory and multi-binomial theorem, the closed-form approximate expressions for the outage probability and secrecy unicast throughput are derived. Note that as to the SOS-based CSI case, due to significant complication of calculating the secrecy unicast throughput , we only focus on two users. Numerical results are provided to demonstrate that the derived approximate expressions of both two cases for the outage probability and secrecy unicast throughput match well with the Monte Carlo simulations.
\item The corresponding orthogonal multiple access (OMA) scheme with partial CSI is developed and a performance comparison between the NOMA and OMA schemes is made. Simulation results show that as to the multicast outage probability, the NOMA scheme in both two cases achieves superior performance compared to the OMA scheme, while for the secrecy unicast throughput, the NOMA scheme shows great advantages over the OMA scheme in good condition (high signal-to-noise ratio (SNR)), but is slightly inferior to the OMA scheme in bad condition (low SNR). Moreover, a comparison between the NOMA scheme with imperfect CSI and the NOMA scheme with SOS-based CSI is presented. Numerical results suggest that both two NOMA schemes obtain similar performance except that the NOMA scheme with imperfect CSI achieves larger secrecy unicast throughput than that based on SOS in high SNR.
\end{itemize}

The rest of this paper is organized as follows. In Section \ref{sec:sysmod}, we introduce the system model and formulate the corresponding power allocation problem. In Section \ref{sec:per_analy}, we first mathematically solve the resource optimization problem. Second, by assuming two types of partial CSI, the multicast outage and secrecy unicast throughput performance are analyzed in detail. Third, the corresponding OMA scheme with partial CSI is provided as a benchmark. Section \ref{sec:rest} presents the simulation results and the conclusions are drawn in Section \ref{sec:conl}.

\section{System Model and Problem Formulation}\label{sec:sysmod}
In this section, we consider a downlink single-cell NOMA network with uniformly deployed users while a mixed multicast and unicast traffic scenario is takeing into account. Aiming to maximize the secrecy unicast throughput, the power allocation and unicast user selection problem is correspondingly formulated.

\subsection{System Model}\label{sub:sysmod}
Consider a single-cell downlink wireless network with one BS communicating with $K$ users. The location of $K$ users are uniformly distributed in a disc with radius $D$, which is denoted as $\cal{D}$, while the BS is located in the center of the disc. Here all the users and the BS are equipped with single antenna. Furthermore, we assume that all users share the same wireless channel resource. The channel between the BS and user $U_k$, $k \in {\cal{K}}=\{1,2,\cdots,K\}$, is modeled as $h_k = g_k d_k^{-\frac{\eta}{2}}$, where $g_k$ is the Rayleigh fading coefficient, $d_k$ is the distance between BS and user $U_k$, and $\eta$ is the path loss exponent. Here we assume that $g_k$ ($k \in \cal{K}$) follows complex normal distribution with zero mean and unit variance, that is $g_k \sim {\cal{CN}}(0,1)$. Denote $\alpha_k$ as the channel gain, namely $\alpha_k = |h_k|^2$. Furthermore, the background noise is modeled as the additive white Gaussian noise ({\bf{AWGN}}) with zero mean and variance $\sigma^2$.

Note that the optimal performance can be achieved with perfect CSI. However, perfect CSI is usually not available in practical communications as the significant consumed overhead and great challenge in achieving CSI exactly. Motivated by this fact, this paper consider two types of practical channel models: imperfect CSI and SOS-based CSI, which will be fully discussed in Section \ref{sec:per_analy}.

This work focuses on a mixed multicast and unicast traffic scenario, i.e., the BS has two messages to send. The multicast traffic is to be received by all the users, while the unicast traffic is intended to a particular user. Here we assume that the BS has unlimited multicast data to transmit, and infinite unicast data for each user to send. For a particular time slot, based on the partial CSI, the BS can select a specific user to transmit the unicast traffic, while the secrecy unicast throughput is maximized and the QoS of the multicast traffic is guaranteed.
\subsection{Problem Formulation}
The transmit data combines the multicast and unicast streaming. The multicast message $s_M$ is intended to all the users, whereas the unicast message $s_U$ is to be received by a particular user. Then, employing the NOMA transmission technique, the transmit data ${x}$, sent by the BS, can be denoted as
\begin{equation}\label{1.1}
x = \sqrt{P_T} (\sqrt{\theta_M} s_M + \sqrt{\theta_U} s_U),
\end{equation}
where $P_T$ is the total transmit power, $\theta_M$ and $\theta_U$ are the power allocation coefficients for the multicast and unicast messages, respectively. Note that $\theta_M$ and $\theta_U$ are designed to satisfy
\begin{equation}\label{1.2}
\theta_M + \theta_U \le 1, \theta_M \ge 0, \theta_U \ge 0.
\end{equation}

The received data at user $U_k$ is
\begin{equation}\label{1.3}
y_k = h_k x + n_k.
\end{equation}
According to the principle of the NOMA, here $U_k$ detects $s_M$ by treating $s_U$ as noise. Therefore, the signal-interference-noise ratio (SINR) for decoding $s_M$ at $U_k$ can be calculated as
\begin{equation}\label{1.4}
SINR_k^M = \frac {\theta_M \alpha_k}{\theta_U \alpha_k+\frac{1}{\rho}},
\end{equation}
where $\rho$ denotes the transmit SNR, namely $\rho = \frac{P_T}{\sigma^2}$. Once $s_M$ is successfully decoded, SIC will be carried out at $U_k$ and $s_M$ is totally removed for detecting $s_U$. Hence, the SNR for decoding $s_U$ is given by
\begin{equation}\label{1.5}
SNR_k^U = \rho \theta_U \alpha_k.
\end{equation}

According to the shannon capacity theory, the achievable rate of the multicast and unicast streams at $U_k$ can be obtained as
\begin{equation}\label{1.6}
R_k^M = \log_2 {(1 + SINR_k^M)}, R_k^U = \log_2 {(1 + SNR_k^U)},
\end{equation}
where the transmit bandwidth is normalized here.

The optimization problem can be mathematically formulated as follows. First, the QoS of the multicast stream is guaranteed, namely the minimum rate of the multicast traffic is ensured, that is
\begin{equation}\label{1.7}
R_k^M \ge R_M, \forall k \in \mathcal{K},
\end{equation}
where $R_M$ is the minimum rate of the multicast traffic.

Similar to \cite{Ding2}, the secrecy unicast throughput is defined as
\begin{equation}\label{1.8}
R_S^U \triangleq [R_j^U- \max{\{R_k^U, k \in \mathcal{K} \backslash \{j\} \}}]^+ ,
\end{equation}
where $[x]^+ \triangleq \max \{0,x\}$. Note that for a particular time slot, unicast traffic receiver $U_j$ should be carefully selected as discussed above.

Therefore, the power allocation and unicast user selection problem considered in this paper can be written as
\begin{equation}\label{1.9}
\max_{j,\theta_M,\theta_U} R_S^U,
\end{equation}
subject to
\begin{equation}\label{1.10}
(\ref{1.2}), (\ref{1.7}).
\end{equation}

\section{Performance analysis}\label{sec:per_analy}
In this section, the analytical framework to evaluate the multicast outage and secrecy unicast throughput performance is developed. First, the optimization problem (\ref{1.9}) is fully solved. Second, based on two types of partial CSI, namely imperfect CSI and SOS-based CSI, the closed-form approximate expressions for the multicast outage probability and secrecy unicast throughput are mathematically derived. Third, the corresponding OMA scheme with partial CSI is provided as a benchmark.

\subsection{Solution of problem (\ref{1.9})}\label{subsec:pro}
Here we denote $\pi = \{\pi_1, \pi_2, \cdots, \pi_K\}$ as a permutation of the user indices, and if $\pi_i = k$, then the user $U_k$ has the $i$-th best channel gain. In other words, the channel gains are sorted as $\alpha_{\pi_1} \ge \alpha_{\pi_2} \ge \cdots \ge \alpha_{\pi_K}$.

According to the definition of $R_S^U$ in (\ref{1.8}), it is easy to derive that $R_S^U = [\log_2 {(\frac{1+\rho \theta_U \alpha_j}{1+ \rho \theta_U \max_{k\in \mathcal{K}\backslash \{j\} } \alpha_k})}]^+$. As $\alpha_{\pi_1} \ge \alpha_{\pi_2} \ge \cdots \ge \alpha_{\pi_K}$, in order to maximize the $R_S^U$, the BS should choose $j$ as user $\pi_1$. Then $R_S^U = \log_2 {(\frac{1+\rho \theta_U \alpha_{\pi_1}}{1+ \rho \theta_U \alpha_{\pi_2}})} = \log_2 {( \frac{\alpha_{\pi_1}}{\alpha_{\pi_2}} - \frac{\frac{\alpha_{\pi_1}}{\alpha_{\pi_2}}-1}{1+\rho \theta_U \alpha_{\pi_2}} )}$. Therefore, $R_S^U$ increases with $\theta_U$, which accords with common sense. Furthermore, for the given $\theta_M$ and $\theta_U$, it is easy to find that $R_{\pi_i}^M > R_{\pi_j}^M$ if $i>j$. Thus constraint (\ref{1.7}) can be simplified as
\begin{equation}\label{}\label{2.1}
R_{\pi_K}^M \ge R_M.
\end{equation}

Hence, Problem (\ref{1.9}) can be equally expressed as
\begin{equation}\label{eq:a1}
\max_{\theta_M,\theta_U} R_S^U = \log_2 {(\frac{1+\rho \theta_U \alpha_{\pi_1}}{1+ \rho \theta_U \alpha_{\pi_2}})},
\end{equation}
subject to
\begin{equation}\label{eq:a2}
(\ref{1.2}), (\ref{2.1}).
\end{equation}

\begin{thm}\label{thm:1}
The optimal solution of problem ({\ref{eq:a1}}) can be obtained while both (\ref{1.2}) and (\ref{2.1}) achieve equality.
\end{thm}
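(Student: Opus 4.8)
The plan is to exploit the fact, already established in the discussion preceding the theorem, that $R_S^U$ does not depend on $\theta_M$ and is non-decreasing in $\theta_U$ (a consequence of $\alpha_{\pi_1}\ge\alpha_{\pi_2}$). Consequently, an optimal solution of (\ref{eq:a1}) is obtained simply by making $\theta_U$ as large as the constraints (\ref{1.2}) and (\ref{2.1}) permit, so the whole argument reduces to characterizing the largest admissible $\theta_U$ and checking that it pins both constraints to their boundaries.

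First I would rewrite the multicast QoS constraint (\ref{2.1}) in closed form. Setting $\gamma_M\triangleq 2^{R_M}-1$ and substituting $SINR_{\pi_K}^M=\tfrac{\theta_M\alpha_{\pi_K}}{\theta_U\alpha_{\pi_K}+1/\rho}$ from (\ref{1.4}), the inequality $R_{\pi_K}^M\ge R_M$ is equivalent to the affine condition $\theta_M\ge\gamma_M\theta_U+\tfrac{\gamma_M}{\rho\alpha_{\pi_K}}$. Hence the feasible set of (\ref{eq:a1}) is exactly the set of pairs $(\theta_M,\theta_U)$ with $\theta_M,\theta_U\ge 0$ and $\gamma_M\theta_U+\tfrac{\gamma_M}{\rho\alpha_{\pi_K}}\le\theta_M\le 1-\theta_U$.

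Next I would eliminate $\theta_M$ between these two affine bounds: a feasible pair can exist only if $\gamma_M\theta_U+\tfrac{\gamma_M}{\rho\alpha_{\pi_K}}\le 1-\theta_U$, i.e. $\theta_U\le\theta_U^{\star}\triangleq\tfrac{1-\gamma_M/(\rho\alpha_{\pi_K})}{1+\gamma_M}$ (assuming $\rho\alpha_{\pi_K}\ge\gamma_M$, otherwise the problem is infeasible and the statement is vacuous). At $\theta_U=\theta_U^{\star}$ the lower and upper bounds on $\theta_M$ coincide, which forces $\theta_M^{\star}=1-\theta_U^{\star}$, so that $\theta_M^{\star}+\theta_U^{\star}=1$ and $R_{\pi_K}^M=R_M$ hold simultaneously. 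By the monotonicity of $R_S^U$ in $\theta_U$, $(\theta_M^{\star},\theta_U^{\star})$ attains the maximum in (\ref{eq:a1}), which is precisely the claim.

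I do not anticipate a genuine analytical obstacle here: the reasoning is a one-variable monotonicity argument combined with elimination of $\theta_M$ between two affine bounds. The only points that warrant a remark are the degenerate ones: feasibility of (\ref{eq:a1}) requires $\rho\alpha_{\pi_K}\ge\gamma_M$, and when $\alpha_{\pi_1}=\alpha_{\pi_2}$ (a probability-zero event under the Rayleigh fading model) one has $R_S^U\equiv 0$ and the optimum is not unique, so the statement should be understood as asserting the existence of an optimal solution with both constraints active, which the pair $(\theta_M^{\star},\theta_U^{\star})$ still provides.
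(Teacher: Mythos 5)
Your proof is correct, but it takes a genuinely different route from the paper's. The paper argues by contradiction in two stages: it supposes an optimum with slack in (\ref{2.1}) and shifts $\Delta\theta$ from $\theta_M$ to $\theta_U$ to reach equality while keeping $\theta_M+\theta_U$ fixed, then supposes an optimum with slack in (\ref{1.2}) and increases both coefficients along the line $\Delta\theta_M=(2^{R_M}-1)\Delta\theta_U$ until the sum reaches one; in each case the strict increase of $R_S^U$ in $\theta_U$ yields the contradiction. You instead argue directly: you rewrite (\ref{2.1}) as the affine lower bound $\theta_M\ge\gamma_M\theta_U+\gamma_M/(\rho\alpha_{\pi_K})$, eliminate $\theta_M$ between this and $\theta_M\le 1-\theta_U$ to get the feasible interval $\theta_U\in[0,\theta_U^{\star}]$, and observe that at the right endpoint the two bounds on $\theta_M$ collapse, pinning both constraints simultaneously. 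Both arguments rest on the same monotonicity fact established before the theorem. Your version is arguably tighter: it produces the closed-form $\theta_U^{\star}=\frac{1-\gamma_M/(\rho\alpha_{\pi_K})}{1+\gamma_M}$ as a byproduct (the paper derives it separately, after the theorem, by solving the two equality conditions), and it makes explicit the infeasibility threshold $\rho\alpha_{\pi_K}<\gamma_M$ that the paper only handles afterwards via the $[\cdot]^{+}$ operator and Remark 1. The paper's contradiction argument, for its part, establishes the marginally stronger claim that \emph{every} optimum is tight in both constraints (when $\alpha_{\pi_1}>\alpha_{\pi_2}$), whereas you prove existence of a tight optimum, which is exactly what the theorem as stated asserts.
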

\begin{proof}\label{prof:1}
See Appendix \ref{appe:1}.
\end{proof}

Based on \emph{Theorem \ref{thm:1}}, the optimal solution $(\theta_M^*,\theta_U^*)$ should satisfy
\begin{equation}\label{eq:a3}
\begin{aligned}
& \theta_M^* + \theta_U^* = 1 \cr
& R_{\pi_K}^M(\theta_M^*,\theta_U^*) = R_M.
\end{aligned}
\end{equation}
Therefore, we can obtain $\theta_U^* = [\frac{\alpha_{\pi_K}- \frac{\epsilon_M}{\rho}}{\alpha_{\pi_K}(1+\epsilon_M)}]^+$, where $\epsilon_M = 2^{R_M}-1$.

\begin{rem}\label{rem:1}
If $\alpha_{\pi_K} < \frac{\epsilon_M}{\rho}$, a multicast outage happens, that is the QoS of the multicast traffic can not be satisfied even if all the power is allocated to $s_M$.
\end{rem}

In conclusion, the optimal solution of problem (\ref{1.9}) is $j = \pi_1, \theta_U^* = [\frac{\alpha_{\pi_K}- \frac{\epsilon_M}{\rho}}{\alpha_{\pi_K}(1+\epsilon_M)}]^+, \theta_M^* =1-\theta_U^* = \min (1,\frac{\alpha_{\pi_K}\epsilon_M+ \frac{\epsilon_M}{\rho}}{\alpha_{\pi_K}(1+\epsilon_M)})$.

\subsection{Performance analysis for imperfect CSI}\label{subsec:imperfect}
In this subsection, based on the imperfect CSI, we analyze the multicast outage and secrecy unicast throughput performance in detail.

Here we assume that the channel feedback to the transmitter is instantaneous and error free, which means that CSI is also achievable at the transmitter whatever CSI the receiver has. As discussed above, such assumption is widely applied in the literature\cite{Yang1,Yoo,Han,Ikki,Wang,Ma}. Define $\hat{h}_k$ as the estimation for channel $h_k$, and estimated channel gain $\hat{\alpha}_k$ can be correspondingly calculated as $\hat{\alpha}_k = |\hat{h}_k|^2$. Similar to \cite{Ikki,Wang,Ma}, assuming the MMSE estimation error, it holds that
\begin{equation}
\alpha_k = \hat{\alpha}_k + \zeta,
\end{equation}
where $\zeta$ is the channel estimation error, which follows a complex Gaussian distribution with mean $0$ and variance $\sigma_\zeta^2$, namely $\zeta \sim {\cal{CN}}(0,\sigma_\zeta^2)$. Thus the estimated channel gain $\hat{\alpha}_k$  follows a complex Gaussian distribution with mean $0$ and variance $\sigma_{\hat{\alpha}_k}^2 = d_k^{-\eta}-\sigma_\zeta^2$\cite{Ikki,Wang,Ma}. Note that $\sigma_\zeta^2$ indicates the quality of channel estimation.

Correspondingly, the optimal solution of problem (\ref{1.9}) is $j = \pi_1, \theta_U^* = [\frac{\hat{\alpha}_{\pi_K}- \frac{\epsilon_M}{\rho}}{\hat{\alpha}_{\pi_K}(1+\epsilon_M)}]^+, \theta_M^* =1-\theta_U^* = \min (1,\frac{\hat{\alpha}_{\pi_K}\epsilon_M+ \frac{\epsilon_M}{\rho}}{\hat{\alpha}_{\pi_K}(1+\epsilon_M)})$.

Next, the multicast outage probability $P_{out}^{1} = Pr(\hat{\alpha}_{\pi_K} < \frac{\epsilon_M}{\rho})$ and statistical expectation of secrecy unicast throughput $R_{S,1}^U = E[R_S^{U}]$ will be analyzed as follows.
\begin{thm}\label{thm:2}
As to $P_{out}^{1}$, it can be approximated as
\begin{equation}\label{Q3:a5}\small
\begin{aligned}
P_{out}^1 \approx 1- [\frac{\pi}{cD} \sum \limits_{i=1}^{c} |\sin{\frac{2i-1}{2c}\pi}| x_i  \exp(-\frac{\epsilon_M }{\rho(x_i^{-\eta}-\sigma_\zeta^2)})]^K,
\end{aligned}
\end{equation}
where $x_i = \frac{D}{2} (1+\cos{(\frac{2i-1}{2c}\pi)})$, and $c$ is the number of terms included in the summation, which controls the approximation accuracy, due to the use of  Gauss-Chebyshev integration \cite{Hildebrand}.
\end{thm}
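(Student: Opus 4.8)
The plan is to collapse the outage probability to a single per-user integral by an order-statistics argument, and then to evaluate that integral by Gauss--Chebyshev quadrature. First I would note that, under imperfect CSI, the BS sorts users by their \emph{estimated} gains, so $\hat{\alpha}_{\pi_K}=\min_{k\in\mathcal{K}}\hat{\alpha}_k$, and hence by the definition of $P_{out}^1$ together with Remark~\ref{rem:1},
\[
P_{out}^1 = Pr\Big(\min_{k}\hat{\alpha}_k < \tfrac{\epsilon_M}{\rho}\Big) = 1 - \prod_{k=1}^{K}Pr\Big(\hat{\alpha}_k \ge \tfrac{\epsilon_M}{\rho}\Big).
\]
The product form is justified because the user locations are i.i.d.\ uniform on the disc and the fading coefficients are mutually independent, so the $\hat{\alpha}_k$ are i.i.d.; all factors being equal, $P_{out}^1 = 1 - [Pr(\hat{\alpha}_1 \ge \epsilon_M/\rho)]^K$, and the problem reduces to the marginal survival probability of one $\hat{\alpha}_k$.

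Next I would compute that marginal by conditioning on the distance $d_k=r$. Since $\hat{h}_k$ is complex Gaussian, $\hat{\alpha}_k=|\hat{h}_k|^2$ is exponentially distributed with mean $\sigma_{\hat{\alpha}_k}^2=r^{-\eta}-\sigma_\zeta^2$, so $Pr(\hat{\alpha}_k \ge \epsilon_M/\rho \mid d_k=r)=\exp\!\big(-\tfrac{\epsilon_M}{\rho(r^{-\eta}-\sigma_\zeta^2)}\big)$. A point placed uniformly in a disc of radius $D$ has distance density $2r/D^2$ on $[0,D]$, whence
\[
Pr\Big(\hat{\alpha}_1 \ge \tfrac{\epsilon_M}{\rho}\Big) = \int_0^D \frac{2r}{D^2}\exp\!\left(-\frac{\epsilon_M}{\rho(r^{-\eta}-\sigma_\zeta^2)}\right)dr .
\]

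This integral has no elementary closed form, which is where the approximation enters. I would substitute $r=\tfrac{D}{2}(1+t)$ to map $[0,D]$ onto $[-1,1]$; the differential $dr$ and the weight $2r/D^2$ combine into $\tfrac{1+t}{2}=r/D$, so with $x_i=\tfrac{D}{2}(1+t_i)$ one gets $\tfrac{x_i}{D}$ as the polynomial factor at node $t_i$. Applying the Gauss--Chebyshev rule $\int_{-1}^{1}f(t)\,dt\approx\tfrac{\pi}{c}\sum_{i=1}^{c}\sqrt{1-t_i^2}\,f(t_i)$ with $t_i=\cos\tfrac{(2i-1)\pi}{2c}$ and $\sqrt{1-t_i^2}=|\sin\tfrac{(2i-1)\pi}{2c}|$, then collecting constants, gives $Pr(\hat{\alpha}_1 \ge \epsilon_M/\rho)\approx\tfrac{\pi}{cD}\sum_{i=1}^{c}|\sin\tfrac{2i-1}{2c}\pi|\,x_i\exp\!\big(-\tfrac{\epsilon_M}{\rho(x_i^{-\eta}-\sigma_\zeta^2)}\big)$; raising this to the $K$-th power yields the claimed expression. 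The only delicate point is the bookkeeping in the change of variables — verifying that the original weight, the Jacobian, and the Chebyshev weight assemble into exactly the stated coefficient — along with the implicit standing assumption $\sigma_\zeta^2<D^{-\eta}$ that keeps the exponent well defined over the full range of integration; the quadrature error is controlled in the usual way by the parameter $c$.
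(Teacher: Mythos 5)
Your proposal is correct and follows essentially the same route as the paper: condition on the uniformly distributed distance to get the exponential survival probability of $\hat{\alpha}_k$, exploit independence across users, and approximate the resulting radial integral by Gauss--Chebyshev quadrature with the same nodes and weights. The only cosmetic difference is that you write $P_{out}^1=1-[Pr(\hat{\alpha}_1\ge \epsilon_M/\rho)]^K$ directly from the minimum of i.i.d.\ variables, whereas the paper routes through the general order-statistics CDF formula and a binomial sum that collapses to the same power form.
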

\begin{proof}\label{prof:2}
See Appendix \ref{appe:2}.
\end{proof}

\begin{rem}\label{rem:2}
Here we evaluate the computational complexity as the number of loops in computation. Then it is easy to obtain that the complexity of $P_{out}^1$ is $O(c)$.
\end{rem}

\begin{rem}\label{rem:3}
The multicast outage expression (\ref{Q3:a5}) is accurate for whole SNR.  Furthermore, $c$ is a important parameter, which affects the accuracy of analytical results. However, as shown by the simulation results, (\ref{Q3:a5}) achieves an accurate approximation even with a small $c$ ($c=50$).
\end{rem}

\begin{rem}\label{rem:4}
Under high SNR, namely $\rho$ is large enough,  it is easy to see that $\exp{(-\frac{\epsilon_M}{\rho (x_i^{-\eta}-\sigma_\zeta^2)})} \approx 1$, then
\begin{equation}\label{Q3:a6}
(P_{out}^1)^\infty \approx 1- [\frac{\pi}{cD} \sum \limits_{i=1}^{c} |\sin{\frac{2i-1}{2c}\pi}| x_i  ]^K.
\end{equation}
According to (\ref{Q3:a6}), we can find that there is no diversity gain, which is in line with our system design, that is all the users independently decode the $s_M$ and $s_U$.
\end{rem}

In addition, the case with perfect CSI is also worth studying as it provides a multicast outage performance upper bound, where the loss due to imperfect CSI can be clearly demonstrated. Fortunately, if perfect CSI is available, namely $\sigma^2_\zeta = 0$, an exact closed-form expression for the multicast outage performance can be derived as follows.
\begin{prop}\label{pro:1}
For the case with perfect CSI, namely $\sigma^2_\zeta = 0$, the multicast outage probability $(P_{out}^1)_{\sigma^2_\zeta = 0}$ can be exactly obtained as
\begin{equation}\label{pro:eq:1}
(P_{out}^1)_{\sigma^2_\zeta = 0} = 1-[\frac{2}{\eta (\frac{\epsilon_M}{\rho})^{\frac{2}{\eta}}D^2}\gamma(\frac{2}{\eta},\frac{\epsilon_M D^\eta}{\rho})]^K,
\end{equation}
where $\gamma(a,b) = \int_0^b{t^{a-1}e^{-t}}dt$ is a lower incomplete gamma function.
\end{prop}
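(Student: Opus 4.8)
The plan is to specialize the multicast outage probability to the perfect-CSI case $\sigma_\zeta^2 = 0$, where $\hat{\alpha}_k = \alpha_k = |g_k|^2 d_k^{-\eta}$, and then evaluate the resulting probability exactly rather than via Gauss--Chebyshev approximation. First I would recall from Remark \ref{rem:1} that a multicast outage occurs precisely when $\alpha_{\pi_K} < \frac{\epsilon_M}{\rho}$, i.e. when the \emph{weakest} user fails the QoS constraint. Since $\alpha_{\pi_K} = \min_{k \in \mathcal{K}} \alpha_k$ and the $K$ users are i.i.d. (independent Rayleigh fading, independent uniform locations), we have
\begin{equation}\label{pro:plan:1}
(P_{out}^1)_{\sigma_\zeta^2=0} = \Pr\!\Big(\min_{k}\alpha_k < \tfrac{\epsilon_M}{\rho}\Big) = 1 - \big[\Pr(\alpha_1 \ge \tfrac{\epsilon_M}{\rho})\big]^K,
\end{equation}
so the whole task reduces to computing the single-user complementary CDF $\Pr(\alpha_1 \ge \epsilon_M/\rho)$.

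Next I would condition on the distance $d_1$. Given $d_1 = r$, the channel gain $\alpha_1 = |g_1|^2 r^{-\eta}$ is exponential with mean $r^{-\eta}$, so $\Pr(\alpha_1 \ge \epsilon_M/\rho \mid d_1 = r) = \exp(-\frac{\epsilon_M r^{\eta}}{\rho})$. Because the users are uniformly deployed in the disc of radius $D$, the density of $d_1$ is $f(r) = \frac{2r}{D^2}$ for $r \in [0,D]$. Hence
\begin{equation}\label{pro:plan:2}
\Pr\!\big(\alpha_1 \ge \tfrac{\epsilon_M}{\rho}\big) = \int_0^D \frac{2r}{D^2}\exp\!\Big(-\frac{\epsilon_M r^{\eta}}{\rho}\Big)\,dr.
\end{equation}
I would then substitute $t = \frac{\epsilon_M r^{\eta}}{\rho}$, so that $r = (\frac{\rho t}{\epsilon_M})^{1/\eta}$ and $dr = \frac{1}{\eta}(\frac{\rho}{\epsilon_M})^{1/\eta} t^{1/\eta - 1}\,dt$, which turns the integral into $\frac{2}{\eta D^2}(\frac{\rho}{\epsilon_M})^{2/\eta}\int_0^{\epsilon_M D^\eta/\rho} t^{2/\eta - 1} e^{-t}\,dt$. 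Recognizing the last integral as the lower incomplete gamma function $\gamma(\frac{2}{\eta}, \frac{\epsilon_M D^\eta}{\rho})$ and raising the whole expression to the $K$-th power via \eqref{pro:plan:1} yields \eqref{pro:eq:1}.

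The argument is essentially a routine change of variables, so there is no serious obstacle; the only point requiring a little care is the i.i.d. reduction in \eqref{pro:plan:1} — one must note that the ordering permutation $\pi$ does not destroy independence, i.e. the event $\{\min_k \alpha_k < \epsilon_M/\rho\}$ is the union of the $K$ independent events $\{\alpha_k < \epsilon_M/\rho\}$, so its complement factorizes as the product of the $K$ identical single-user probabilities. One should also check the convergence/well-definedness of $\gamma(\frac{2}{\eta},\cdot)$, which holds since $\eta > 0$ implies $\frac{2}{\eta} > 0$. I would present the computation in the order: (i) reduce to the weakest user and factor via independence; (ii) condition on distance to get the exponential tail; (iii) integrate against the uniform-disc distance density; (iv) change variables to obtain $\gamma(\cdot,\cdot)$ and take the $K$-th power.
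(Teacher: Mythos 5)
Your proposal is correct and follows essentially the same route as the paper: reduce the outage event to the weakest user, factor the complementary probability as the $K$-th power of the single-user tail by independence, and evaluate $\int_0^D \frac{2r}{D^2}e^{-\epsilon_M r^\eta/\rho}\,dr$ as a lower incomplete gamma function (the paper invokes entry $3.326.4$ of \cite{Zwillinger} for this last integral, whereas you carry out the substitution $t=\epsilon_M r^\eta/\rho$ explicitly, which is the same computation). No gaps.
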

\begin{proof}\label{prof:2.1}
See Appendix \ref{appe:2.1}.
\end{proof}

\begin{rem}\label{rem:4.2}
For the perfect CSI, the complexity of obtaining the exact closed-form expression for $P_{out}^1$ in (\ref{pro:eq:1}) is $O(K)$, which is acceptable for large $K$.
\end{rem}

\begin{rem}\label{rem:4.5}
According to $8.354.1$ in\cite{Zwillinger}, namely $\gamma(\alpha,x) = \sum\limits_{n=0}^\infty \frac{(-1)^n x^{\alpha+n}}{n!(\alpha+n)} $, when $\rho \to \infty$, $(P_{out}^1)_{\sigma^2_\zeta = 0}$ can be expressed as
\begin{equation}\label{eq:rem4.5}\footnotesize
(P_{out}^1)_{\sigma^2_\zeta = 0}^\infty \approx 1 - W(\eta),
\end{equation}
where $W(\eta)$ is a positive constant related to $\eta$. From (\ref{eq:rem4.5}), we can conclude that even for the perfect CSI, there is also no diversity gain, which accords with our system design, that is all the users independently decode the $s_M$ and $s_U$.
\end{rem}

\begin{thm}\label{thm:3}
As to the average secrecy unicast throughput $R_{S,1}^{U}$, it can be approximated as
\begin{equation}\label{Q3:14}\small
\begin{aligned}
R_{S,1}^U &\approx (1-P_{out}^1) \frac{K \pi \rho}{2m\ln 2}  \sum_{u=1}^{m} |\sin \frac{2u-1}{2m}\pi|\cr
&\left\{\frac{1}{\rho\tau_u} - \sum_{r_0+r_1+ \cdots +r_n = K-1} A(r_1,\cdots,r_n) H(r_1,\cdots,r_n,u) \right\}, \cr
\end{aligned}
\end{equation}
where
\begin{equation*}\label{thm3:add}\footnotesize
\begin{aligned}
&A(r_1,\cdots,r_n) = \frac{(K-1)!}{(K-1-\sum_{t=1}^n r_t)!r_1!\cdots r_n!} (\frac{-\pi}{nD})^{r_1+r_2+\cdots + r_n}\cr
&\prod_{t=1}^{n} (|\sin{\frac{2t-1}{2n}\pi}| x_t)^{r_t},\cr
\end{aligned}
\end{equation*}
\begin{equation*}\label{thm3:addd2}\footnotesize
\begin{aligned}
&\tau_u = \frac{1}{2}(\cos \frac{2u-1}{2m} \pi +1),x_t = \frac{D}{2} (1+\cos{\frac{2t-1}{2n}\pi)},\cr
\end{aligned}
\end{equation*}
\begin{equation*}\label{thm3:add1}\footnotesize
\begin{aligned}
&H(r_1,\cdots,r_n,u)= -\frac{\pi}{nD\rho \tau_u} \sum \limits_{t=1}^{n} |\sin{\frac{2t-1}{2n}\pi}| x_t G(r_1,\cdots,r_n,u,t)\cr
&+I_2(r_1,\cdots,r_n),\cr
\end{aligned}
\end{equation*}
\begin{equation*}\label{thm3:add2}\footnotesize
\begin{aligned}
&I_2(r_1,\cdots,r_n) =
\begin{cases}
\frac{1}{\rho\tau_u}, &r_1=r_2=\cdots=r_n=0\cr
0,&\text{othwise}.
\end{cases}
\cr
\end{aligned}
\end{equation*}
\begin{equation*}\label{thm3:add3}\footnotesize
\begin{aligned}
&G(r_1,\cdots,r_n,u,t)= 1-\frac{B(r_1,\cdots,r_n,u)}{\rho \tau_u \bar{\mu}_1}+\cr
&\nu e^{\nu \bar{\mu}_1}\text{Ei}(-\nu \bar{\mu}_1)[\bar{\mu}_1-\frac{B(r_1,\cdots,r_n,u)}{\rho\tau_u} ],\cr
\end{aligned}
\end{equation*}
\begin{equation*}\label{thm3:addd1}\footnotesize
\begin{aligned}
&B(r_1,\cdots,r_n,u) = \tau_u (\sum_{t=1}^n \frac{r_t} {x_t^{-\eta}-\sigma_\zeta^2} ),\cr
\end{aligned}
\end{equation*}
\begin{equation}\label{thm3:add4}\footnotesize
\begin{aligned}
&\bar{\mu}_1 \mathop =\limits^{\triangle} \mu_1(r_1,\cdots,r_n,u,t) = \frac{B(r_1,\cdots,r_n,u)+\frac{1}{x_t^{-\eta}-\sigma_\zeta^2}}{\rho \tau_u},\cr
&\nu = 1+\epsilon_M .
\end{aligned}
\end{equation}
Note that $\text{Ei}(x) = \int_{-\infty}^x \frac{e^t}{t}\text{d}t, \text{for } x<0$, and $P_{out}^1$ has been derived in (\ref{Q3:a5}). Similarly, $m$ and $n$ controls the approximation accuracy due to the use of Gauss-Chebyshev integration\cite{Hildebrand}.
\end{thm}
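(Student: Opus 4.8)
\emph{Proof strategy.} I would evaluate $R_{S,1}^U=E[R_S^U]$ directly, starting from the optimal solution derived in Section~\ref{subsec:pro}. Under imperfect CSI, Theorem~\ref{thm:1} together with that solution gives $R_S^U=0$ on the multicast outage event $\{\hat\alpha_{\pi_K}<\epsilon_M/\rho\}$ and $R_S^U=\log_2\!\big((1+\rho\theta_U^*\hat\alpha_{\pi_1})/(1+\rho\theta_U^*\hat\alpha_{\pi_2})\big)$ otherwise, with $\theta_U^*=(\hat\alpha_{\pi_K}-\epsilon_M/\rho)/\big(\hat\alpha_{\pi_K}(1+\epsilon_M)\big)$; note that $1+\rho\theta_U^*\hat\alpha_{\pi_K}=(1+\rho\hat\alpha_{\pi_K})/(1+\epsilon_M)$, which is the source of the factor $\nu=1+\epsilon_M$ in the statement. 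Writing $E[R_S^U]=\Pr(\text{no outage})\cdot E[R_S^U\mid\text{no outage}]$ immediately splits off the prefactor $1-P_{out}^1$ of \eqref{Q3:14}, with $P_{out}^1$ supplied by Theorem~\ref{thm:2}. Because the users are uniformly deployed, the pairs $(d_k,g_k)$, hence the estimated gains $\hat\alpha_1,\dots,\hat\alpha_K$, are i.i.d.; conditionally on $d_k$ each $\hat\alpha_k$ is exponential with mean $d_k^{-\eta}-\sigma_\zeta^2$, so unconditionally it has complementary CDF $p(x)=\int_0^D e^{-x/(d^{-\eta}-\sigma_\zeta^2)}\,(2d/D^2)\,\mathrm{d}d$, the same distance mixture already used in the proof of Theorem~\ref{thm:2}.

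The core is an order-statistics computation, which I would perform in four stages. \emph{(i)} Use the joint density of the order statistics that matter---the two largest gains $\hat\alpha_{\pi_1}\ge\hat\alpha_{\pi_2}$ and the smallest $\hat\alpha_{\pi_K}$ (which enters only through $\theta_U^*$)---and integrate the logarithm against the conditional density of $\hat\alpha_{\pi_1}$ by parts; this replaces $\log(1+\rho\theta_U^* x)$ by a rational factor, kills the boundary terms, and reduces each conditional expectation to a single ordinary integral. \emph{(ii)} Substitute the Gauss--Chebyshev approximation $p(x)\approx\frac{\pi}{nD}\sum_{t=1}^n|\sin\tfrac{2t-1}{2n}\pi|\,x_t\,e^{-x/(x_t^{-\eta}-\sigma_\zeta^2)}$, with $x_t=\tfrac{D}{2}(1+\cos\tfrac{2t-1}{2n}\pi)$ (and its derivative for the densities), and expand the power $[1-p(\cdot)]^{K-1}$ by the binomial theorem and each resulting $p(\cdot)^{j}$ by the multinomial theorem---this produces exactly the sum over compositions $r_0+r_1+\cdots+r_n=K-1$ and the coefficient $A(r_1,\dots,r_n)$. \emph{(iii)} Evaluate the remaining elementary integrals, which have the form $\int\frac{e^{-\mu x}}{\text{linear in }x}\,\mathrm{d}x$ (or with the linear factor squared); these are exponential integrals, and they yield the blocks $B$, $\bar\mu_1$, $G$ and the term $e^{\nu\bar\mu_1}\,\text{Ei}(-\nu\bar\mu_1)$. \emph{(iv)} Apply a second Gauss--Chebyshev quadrature to the last one-dimensional integral (over $\hat\alpha_{\pi_K}$, after a change of variable onto a bounded interval), which introduces the nodes $\tau_u=\tfrac12(1+\cos\tfrac{2u-1}{2m}\pi)$, the weights $|\sin\tfrac{2u-1}{2m}\pi|$ and the outer prefactor $\tfrac{K\pi\rho}{2m\ln 2}$. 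Collecting all pieces gives \eqref{Q3:14}.

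The main obstacle is the combinatorial and analytic bookkeeping of this nested construction: the secrecy rate genuinely couples three order statistics, so the two successive Gauss--Chebyshev substitutions must be interleaved with the multinomial expansion of the CDF powers and with the exponential-integral evaluations without the expression exploding, and one must carefully track which index runs over which node set---hence the extra summation index $t$ appearing inside $G$ and $H$ alongside the composition $(r_1,\dots,r_n)$. One also has to verify that every integration-by-parts boundary term vanishes and that the truncated order-statistic densities are used consistently. It is precisely this multi-order-statistic coupling that renders the analogous calculation intractable beyond two users in the SOS-based CSI case.
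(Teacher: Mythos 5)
Your overall decomposition $R_{S,1}^U=(1-P_{out}^1)\,E[\,\cdot\mid\text{no outage}\,]$, the identification of the mixture distribution of $\hat\alpha_k$, the multinomial expansion of $\{F_{\hat\alpha}\}^{K-1}$ producing $A(r_1,\dots,r_n)$, and the reduction of the final integrals to exponential-integral form all match the paper. But there is a genuine gap at the heart of your plan: you propose to carry the exact $\theta_U^*=\big(\hat\alpha_{\pi_K}-\tfrac{\epsilon_M}{\rho}\big)/\big(\hat\alpha_{\pi_K}(1+\epsilon_M)\big)$ through the computation, work with the joint law of the \emph{three} order statistics $\hat\alpha_{\pi_1},\hat\alpha_{\pi_2},\hat\alpha_{\pi_K}$, and then quadrature out $\hat\alpha_{\pi_K}$ at the end, attributing the nodes $\tau_u$ to that last integral. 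The paper does something decisively simpler and different: it invokes the high-SNR approximation $\theta_U^*\approx\frac{1}{1+\epsilon_M}$ \emph{before} any integration, which removes $\hat\alpha_{\pi_K}$ from the secrecy rate entirely and reduces the target to $\Omega=E\{\log_2(\nu+\rho\hat\alpha_{\pi_1})-\log_2(\nu+\rho\hat\alpha_{\pi_2})\}$, a two-order-statistic functional with joint density $K(K-1)f_{\hat\alpha}(x)\{F_{\hat\alpha}(y)\}^{K-2}f_{\hat\alpha}(y)$. This step is the reason the theorem is only an approximation valid at high SNR (cf.\ Remark 6), and without it your three-variable integral does not collapse to the stated closed form: the final expression \eqref{Q3:14} carries no dependence on the distribution of $\hat\alpha_{\pi_K}$ beyond the prefactor $1-P_{out}^1$, which your exact-$\theta_U^*$ route could not reproduce.

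A second, related misidentification: the nodes $\tau_u=\tfrac12(1+\cos\tfrac{2u-1}{2m}\pi)$ do not come from a quadrature over $\hat\alpha_{\pi_K}$. They arise when the \emph{inner} integral over the second-largest gain, $\int_0^x\{F_{\hat\alpha}(y)\}^{K-1}/(\nu+\rho y)\,\mathrm{d}y$ (obtained after integrating the logarithm by parts against $\mathrm{d}\{F_{\hat\alpha}(y)\}^{K-1}/(K-1)$), is approximated by Gauss--Chebyshev with the substitution $y=x\tau$; this is what produces $F_{\hat\alpha}(x\tau_u)$ inside the outer integral and hence the $u$-dependence of $B$ and $\bar\mu_1$. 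The outer integral over $\hat\alpha_{\pi_1}$ is then handled by a second integration by parts (yielding the boundary term $1/(\rho\tau_u)$ and the $I_2$ case distinction) together with the $n$-node approximation of $F_{\hat\alpha}$, giving $G$ and $H$. So while several of your ingredients are the right ones, the missing high-SNR decoupling of $\theta_U^*$ and the misplaced role of the $\tau_u$ quadrature mean your plan as stated would not arrive at \eqref{Q3:14}.
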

\begin{proof}\label{prof:3}
See Appendix \ref{appe:3}.
\end{proof}

\begin{rem}\label{rem:5}
The computational complexity of $R_{S,1}^U$ can be analyzed as follows. Note that the complexity of calculating $P_{out}^1$ is $O(c)$ and the possibilities of $r_0+r_1+\cdots +r_n = K-1$ is $\binom{K+n-1}{n}$. Next, the complexity of calculating $A(r_1,\cdots,r_n)$ is $O(n)$, while that of obtaining $H(r_1,\cdots,r_n,u)$ is $O(n^2)$. Hence, the complexity of $R_{S,1}^U$ can be easily achieved as $O\left(c+m\binom{K+n-1}{n}(n+n^2)\right)$. Here an important tradeoff between the accuracy and computational complexity should be carefully considered. Note that the complexity dramatically increases with $n$ due to $n^2\binom{K+n-1}{n}$. However, a small $n$ will lead to a large estimation gap. According to our simulation results, $n=10$ is proper as both the complexity and the accuracy are acceptable.
\end{rem}

\begin{rem}\label{rem:6}
Although the average secrecy unicast throughput $R_{S,1}^U$ is derived under high SNR, numerical results show that (\ref{Q3:14}) is also accurate for low and moderate SNR.
\end{rem}

\begin{rem}\label{rem:7}
For the case with perfect CSI, namely $\sigma^2_\zeta = 0$, unfortunately the exact closed-form expression for $R_{S,1}^U$ can not be achieved because the integration over lower incomplete gamma function $\gamma(\alpha,x)$ is difficult to obtain.
\end{rem}

\subsection{Performance analysis for SOS-based CSI}\label{subsec:sos}
In this subsection, the multicast outage and secrecy unicast throughput performance for the SOS-based CSI are analyzed in detail.

Here the BS only knows the distance information. Without loss of generality, we assume that $d_1 \le d_2 \le \cdots \le d_K$. The meaning of this case is that only the distance can be available in certain practical communications. As the channel gain is modeled as $\alpha_k = |g_k|^2 d_k^{-\eta}$, we can find that $\alpha_k,k \in \cal{K}$ are not necessarily ordered, that is $\alpha_j$ might be larger than $\alpha_k$ for $j>k$. However, the channel gain $\alpha_k$ is mainly dominated by the large-scale fading $d_k^{-\eta}$, which is also the motivation why we order the users according to their distances.

As to the optimization problem (\ref{1.9}), since the perfect channel gains $\alpha_k, k \in \cal{K}$ are not available at the BS, BS should set $j = 1, \theta_U^* = [\frac{\alpha_{K}- \frac{\epsilon_M}{\rho}}{\alpha_{K}(1+\epsilon_M)}]^+, \theta_M^* =1-\theta_U^*$.

Next, the multicast outage probability $P_{out}^{2} = Pr(\alpha_k < \frac{\epsilon_M}{\rho},\exists k \in \cal{K})$ and statistical expectation of secrecy unicast throughput $R_{S,2}^U = E[R_S^{U}]$ will be analyzed as follows.

\begin{thm}\label{thm:4}
As to $P_{out}^{2}$, the exact closed-form expression can be achieved as
\begin{equation}\label{Q2:1}\small
\begin{aligned}
&P_{out}^2 = 1- \prod_{k=1}^{K} \cr
&\left[2k \tbinom{K}{k} \sum \limits_{j=0}^{K-k} \tbinom{K-k}{j} \frac{(-1)^j}{D^{2(k+j)}} \frac{({\frac{\epsilon_M}{\rho})}^{-\frac{2(k+j)}{\eta}}}{\eta} \gamma{(\frac{2(k+j)}{\eta},\frac{\epsilon_M D^\eta}{\rho})}\right].
\end{aligned}
\end{equation}
Note that $\gamma(a,b)$ is a lower incomplete gamma function.
\end{thm}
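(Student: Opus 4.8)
The plan is to compute $P_{out}^2 = \Pr(\exists k \in \mathcal{K}: \alpha_k < \epsilon_M/\rho)$ directly. Since the fading coefficients $g_k$ are independent across users and the distances $d_k$ are fixed (known), the events $\{\alpha_k < \epsilon_M/\rho\}$ are independent, so $P_{out}^2 = 1 - \prod_{k=1}^{K} \Pr(\alpha_k \ge \epsilon_M/\rho)$. The core of the proof is therefore to evaluate, for a single user, the quantity $\Pr(\alpha_k \ge \epsilon_M/\rho)$, where $\alpha_k = |g_k|^2 d_k^{-\eta}$ and $d_k$ is itself a random variable induced by the uniform placement of users in the disc $\mathcal{D}$ of radius $D$ — but conditioned only on the \emph{ordering} $d_1 \le d_2 \le \cdots \le d_K$ that the BS uses for indexing. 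So the probability must be taken over both $|g_k|^2$ (which is $\text{Exp}(1)$) and the order statistic $d_{(k)}$ of $K$ i.i.d.\ distances.

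First I would recall the distribution of a single distance under uniform placement in the disc: the pdf of $d$ is $f_d(r) = 2r/D^2$ for $r \in [0,D]$, hence the cdf is $F_d(r) = r^2/D^2$. The $k$-th smallest of $K$ such i.i.d.\ distances has the standard order-statistic density $f_{d_{(k)}}(r) = k\binom{K}{k} F_d(r)^{k-1}(1-F_d(r))^{K-k} f_d(r)$. Next, conditioning on $d_{(k)} = r$, I would write $\Pr(|g_k|^2 d_{(k)}^{-\eta} \ge \epsilon_M/\rho \mid d_{(k)}=r) = \exp(-\epsilon_M r^\eta/\rho)$ using the exponential tail. Then I integrate: $\Pr(\alpha_{(k)} \ge \epsilon_M/\rho) = \int_0^D \exp(-\epsilon_M r^\eta/\rho)\, k\binom{K}{k}\big(r^2/D^2\big)^{k-1}\big(1 - r^2/D^2\big)^{K-k}\, \frac{2r}{D^2}\, dr$. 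Expanding $(1 - r^2/D^2)^{K-k}$ by the binomial theorem produces the sum $\sum_{j=0}^{K-k}\binom{K-k}{j}(-1)^j r^{2j}/D^{2j}$, so the integral becomes a finite sum of terms $\int_0^D r^{2(k+j)-1}\exp(-\epsilon_M r^\eta/\rho)\,dr$. The substitution $t = \epsilon_M r^\eta/\rho$ turns each such integral into a lower incomplete gamma function $\gamma\!\big(\tfrac{2(k+j)}{\eta}, \tfrac{\epsilon_M D^\eta}{\rho}\big)$ up to the prefactor $\tfrac{1}{\eta}\big(\tfrac{\epsilon_M}{\rho}\big)^{-2(k+j)/\eta}$; collecting constants gives exactly the bracketed factor in \eqref{Q2:1}, and taking the product over $k$ and subtracting from $1$ finishes the argument.

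The step I expect to require the most care is the substitution and bookkeeping that converts $\int_0^D r^{2(k+j)-1} e^{-\epsilon_M r^\eta/\rho}\,dr$ into the stated gamma-function form — one must track the exponents correctly when writing $r = (\rho t/\epsilon_M)^{1/\eta}$, so that $dr = \tfrac{1}{\eta}(\rho/\epsilon_M)^{1/\eta} t^{1/\eta - 1} dt$ and the power of $t$ ends up as $\tfrac{2(k+j)}{\eta} - 1$, with the upper limit mapping from $D$ to $\epsilon_M D^\eta/\rho$; a sign or factor-of-$\eta$ slip here would corrupt every term. A secondary point worth stating explicitly is the \emph{independence} justification at the outset: because the distances are treated as deterministic (known) in the SOS model and the fades are independent, the per-user outage events are mutually independent, which is what legitimizes the product form; this is unlike the imperfect-CSI case where a joint order statistic $\hat{\alpha}_{\pi_K}$ appears and only the minimum matters. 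Everything else — the order-statistic density, the binomial expansion, the exponential tail — is routine, so I would present those briefly and spend the detail budget on the change of variables.
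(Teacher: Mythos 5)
Your derivation follows the paper's Appendix E essentially step for step: the order-statistic density of $d_k$ under uniform placement in the disc, the exponential tail $\Pr(|g_k|^2 \ge \epsilon_M d_k^\eta/\rho \mid d_k) = e^{-\epsilon_M d_k^\eta/\rho}$, the binomial expansion of $(1-x^2/D^2)^{K-k}$, the reduction of $\int_0^D x^{2(k+j)-1}e^{-\epsilon_M x^\eta/\rho}\,dx$ to a lower incomplete gamma function, and the final product over $k$. One caveat on your justification of the factorization: you argue the per-user outage events are independent because the distances are ``fixed (known),'' but you then average each marginal over the law of the order statistic $d_{(k)}$ — these two stances are incompatible, since the order statistics of i.i.d.\ distances are mutually dependent, and conditional independence of the fades given the distances does not by itself yield $\Pr(\forall k:\alpha_k\ge\epsilon_M/\rho)=\prod_k\Pr(\alpha_k\ge\epsilon_M/\rho)$ after unconditioning. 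The paper's own proof takes exactly this product-of-marginals step (its equality $(a)$) without further justification, so you arrive at the stated formula by the paper's route; just be aware that the independence argument as you phrased it does not actually close that step, and the rest of your computation (which is where you placed your care) is routine and correct.
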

\begin{proof}\label{prof:4}
See Appendix \ref{appe:4}.
\end{proof}

\begin{rem}\label{rem:8}
It is easy to obtain that the complexity of $P_{out}^2$ is $O(\frac{K(K+1)}{2})$. Compared to (\ref{pro:eq:1}), the complexity of achieving the multicast outage probability in SOS-based CSI is larger than that of perfect CSI, and it is also larger than that of imperfect CSI (shown in (\ref{Q3:a5})) in our simulation settings.
\end{rem}

\begin{rem}\label{rem:9}
The multicast outage expression (\ref{Q2:1}) is accurate for whole SNR, which is similar to that in imperfect CSI.
\end{rem}

\begin{rem}\label{rem:10}
Under high SNR, namely $\rho$ is large enough, according to $8.354.1$ in\cite{Zwillinger}, $(P_{out}^2)^\infty$ can be expressed as
\begin{equation}
(P_{out}^2)^\infty \approx 1- Z(\eta),
\end{equation}
where $Z(\eta)$ is a positive constant related to $\eta$. Therefore, we can conclude that there is no diversity gain for SOS-based CSI, which is similar to that in imperfect CSI.
\end{rem}

As to the average secrecy unicast throughput $R_{S,2}^U$, due to that BS only knows the distance information, we can not guarantee that channel gains are sorted according to their distances. For example, there are two users $U_1$ and $U_2$ in the cell, with $d_1\le d_2$. It is possible that $\alpha_1 < \alpha_2$. Note that the achieved security throughput depends on the order of joint channel gain. Hence, in general, when $K>2$, it is difficult to obtain a closed-form expression for $R_{S,2}^U$. In this paper, we only focus on the special case of $K=2$.

\begin{thm}\label{thm:5}
When $K =2$, $R_{S,2}^U$ can be approximated as
\begin{equation}\label{Q2:21}\small
\begin{aligned}
R_{S,2}^U &=\frac{4\pi}{l\eta D^4\ln2}\sum_{i=1}^l |\sin(\frac{2i-1}{2l}\pi)| \kappa_i^{\frac{2}{\eta}-1} \cr
&\{ J(i) \ln \nu -\frac{ \ln(\nu+\epsilon_M)\gamma(\frac{4}{\eta},\frac{(\kappa_i+1)\epsilon_M D^\eta}{\rho}) }{\eta(\kappa_i+1)[\frac{(\kappa_i+1)\epsilon_M}{\rho}]^{4/\eta}} \cr &+\frac{D\pi}{2q}\sum_{j=1}^q |\sin(\frac{2j-1}{2q}\pi)| x_j \bar{J}(i,x_j) \},
\end{aligned}
\end{equation}
where
\begin{equation*}\small
\begin{aligned}
&J(i)= \frac{D^4}{2} e^{-\frac{\epsilon_M D^\eta}{\rho}\kappa_i}-\frac{\gamma(\frac{4}{\eta},\frac{\epsilon_M \kappa_i D^\eta}{\rho})}{\eta[\frac{\epsilon_M \kappa_i}{ \rho}]^{4/\eta}}+\frac{D^4}{4(\kappa_i+1)}\cr
\end{aligned}
\end{equation*}
\begin{equation*}\footnotesize
\begin{aligned}
&\bar{J}(i,x) = -e^{\frac{\nu x^\eta}{\rho}}\text{Ei}(-\frac{\nu x^\eta}{\rho})(D^2 e^{-\frac{\epsilon_M D^\eta}{\rho}\kappa_i}- x^2 e^{-\frac{\epsilon_M x^\eta}{\rho}\kappa_i})\cr
&-\frac{x^2}{\kappa_i+1}e^{\frac{\nu(\kappa_i+1)x^\eta}{\rho}} [\text{Ei}(-\frac{\nu(\kappa_i+1)x^\eta}{\rho})-\text{Ei}(-\frac{(\nu+\epsilon_M)(\kappa_i+1)x^\eta}{\rho})]\cr
&-D^2[\frac{x^\eta \ln \nu}{D^\eta\kappa_i+x^\eta} - \frac{x^\eta}{D^\eta\kappa_i+x^\eta}e^{\frac{\nu(D^\eta\kappa_i+x^\eta)}{\rho}}\text{Ei}(-\frac{\nu(D^\eta\kappa_i+x^\eta)}{\rho})],
\end{aligned}
\end{equation*}\label{eq:thm:5}
\begin{equation}\small
\kappa_i = \frac{1}{2}(1+\cos(\frac{2i-1}{2l}\pi)),x_j = \frac{D}{2}(1+\cos(\frac{2j-1}{2q}\pi)).
\end{equation}
Note that $\nu = 1+ \epsilon_M$, $\text{Ei}(x) = \int_{-\infty}^x \frac{e^t}{t}\text{d}t, \text{for } x<0$, and $\gamma(a,b)$ is a lower incomplete gamma function. Here $q$ and $l$ are the number of terms included in the summation, which controls the approximation accuracy, due to the use of Gauss-Chebyshev integration\cite{Hildebrand}.
\end{thm}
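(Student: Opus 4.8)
The plan is to evaluate $R_{S,2}^U=E[R_S^U]$ directly from the $K=2$ specialization of the optimal solution of Problem~(\ref{1.9}): the BS takes $j=1$ and $\theta_U^*=\big[\tfrac{\alpha_2-\epsilon_M/\rho}{\alpha_2\nu}\big]^+$ with $\nu=1+\epsilon_M$. The first step is to simplify the random integrand. Put $y=\rho\theta_U^*$, a function of $\alpha_2$ only; then $\rho\theta_U^*\alpha_2=\tfrac{\rho\alpha_2-\epsilon_M}{\nu}$, which yields the key identity $1+y\alpha_2=\tfrac{1+\rho\alpha_2}{\nu}$, and hence, on the event where $\theta_U^*>0$,
\[
\ln 2\cdot R_S^U=\ln(1+y\alpha_1)-\ln(1+\rho\alpha_2)+\ln\nu .
\]
Since $\theta_U^*>0$ forces $\alpha_2>\epsilon_M/\rho$ and the $[\,\cdot\,]^+$ forces $\alpha_1>\alpha_2$ (with $R_S^U=0$ otherwise), $R_{S,2}^U$ is the integral of this expression over $\{\alpha_1>\alpha_2>\epsilon_M/\rho\}$ against the joint law in which the two ordered user distances have density $8d_1d_2/D^4$ on $\{0\le d_1\le d_2\le D\}$, and, conditioned on the distances, $\alpha_1,\alpha_2$ are independent exponentials with rates $d_1^{\eta},d_2^{\eta}$. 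I would then substitute $(d_1,d_2)\mapsto(\kappa,d_2)$ with $\kappa=(d_1/d_2)^{\eta}\in(0,1)$; the Jacobian makes the distance weight $\tfrac{8}{\eta D^4}\,d_2^{3}\,\kappa^{2/\eta-1}$, which already accounts for the $\kappa_i^{2/\eta-1}$ and the constant $\tfrac{4}{\eta D^4}$ appearing in (\ref{Q2:21}).

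Next I would integrate out $\alpha_1$ over $(\alpha_2,\infty)$. By parts, $\int_{x}^{\infty}\ln(1+yt)\,c\,e^{-ct}\,dt=\ln(1+yx)e^{-cx}-e^{c/y}\mathrm{Ei}\!\big(-\tfrac{c(1+yx)}{y}\big)$ with $c=d_1^{\eta}$, $x=\alpha_2$; using $\ln(1+y\alpha_2)=\ln(1+\rho\alpha_2)-\ln\nu$, the boundary logarithm cancels exactly against the two $\alpha_1$-constant terms $-\ln(1+\rho\alpha_2)+\ln\nu$, so the $\alpha_1$-integral collapses to the single term $-e^{d_1^{\eta}/y}\mathrm{Ei}\!\big(-\tfrac{d_1^{\eta}(1+y\alpha_2)}{y}\big)$. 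I would then put its $\mathrm{Ei}$-argument into workable shape via the partial fraction $\tfrac{1+y\alpha_2}{y}=\alpha_2+\tfrac{\nu}{\rho}+\tfrac{\nu\epsilon_M}{\rho(\rho\alpha_2-\epsilon_M)}$ and, with $s=\rho\alpha_2-\epsilon_M$, the factorization $s+\tfrac{\nu\epsilon_M}{s}+(\nu+\epsilon_M)=\tfrac{(s+\nu)(s+\epsilon_M)}{s}$ — this is the origin of the constant $\nu+\epsilon_M$ in (\ref{Q2:21}). Since $d_1^{\eta}=\kappa d_2^{\eta}$, everything inside the $\mathrm{Ei}$ and in the remaining exponentials is affine in $d_2^{\eta}$, so the substitution $v=d_2^{\eta}$ reduces the $d_2$-integral (weight $d_2^{3}$) to elementary integrals of the form $\int v^{4/\eta-1}e^{-pv}\,dv$ — which give the lower incomplete gamma $\gamma(4/\eta,\cdot)$, hence the $J(i)$ and $\ln(\nu+\epsilon_M)\,\gamma(4/\eta,\cdot)$ terms — together with companion integrals carrying an extra $\mathrm{Ei}(-qv)$ factor, which have no elementary antiderivative; those I would approximate by an order-$q$ Gauss--Chebyshev rule on $d_2\in(0,D)$ (nodes $x_j$, one factor $d_2$ going into the weight), producing the inner sum $\tfrac{D\pi}{2q}\sum_j|\sin(\tfrac{2j-1}{2q}\pi)|\,x_j\,\bar J(i,x_j)$.

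Finally I would approximate the outer $\kappa$-integral over $(0,1)$ by an order-$l$ Gauss--Chebyshev rule (nodes $\kappa_i$); collecting the constant $\tfrac{8}{\eta D^4\ln 2}$, the quadrature weight $\tfrac{\pi}{2l}|\sin(\tfrac{2i-1}{2l}\pi)|$ and the Jacobian factor $\kappa_i^{2/\eta-1}$ reproduces (\ref{Q2:21}), with $J$ and $\bar J$ the bundled closed-form pieces. I expect the $\alpha_2$-integration to be the main obstacle: because the optimal power split $\theta_U^*$ itself depends on $\alpha_2$, the surviving $\mathrm{Ei}$ has an argument that is a genuine rational function of $\alpha_2$, so the integral falls outside standard tables; the real work is to split it cleanly into an exactly-integrable part (the incomplete-gamma terms) and a quadrature remainder, and then carry both through the two distance integrations while keeping every $\mathrm{Ei}$-argument in the affine-in-$d_2^{\eta}$ form that appears inside $\bar J$. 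This same dependence, together with the need to track the order of the joint channel gains, is exactly what puts a closed form for $K>2$ out of reach.
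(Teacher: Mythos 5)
Your plan has a genuine gap: you never invoke the high-SNR approximation $\theta_U^* \approx \frac{1}{1+\epsilon_M}$, and that approximation is precisely what makes the stated formula derivable. The paper (see Remark \ref{rem:12} and Appendix \ref{appe:5}) replaces the exact $\theta_U^* = \frac{\alpha_2-\epsilon_M/\rho}{\alpha_2\nu}$ by $1/\nu$ \emph{before} taking the expectation, so the integrand becomes $\ln(\nu+\rho|g_1|^2/d_1^\eta)-\ln(\nu+\rho|g_2|^2/d_2^\eta)$ with a \emph{fixed} constant $\nu$ in both logarithms; the four-fold integral then splits into two pieces ${\cal L}_4,{\cal L}_5$, in each of which the logarithm depends on only one $(|g|^2,d)$ pair, so the other pair is integrated out first in closed form (incomplete gammas and the tabulated $\int_0^\infty\ln(\nu+ax)e^{-\mu x}dx$ identity), and Gauss--Chebyshev is applied only to the residual one-dimensional integrals. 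You instead keep $y=\rho\theta_U^*$ exact, integrate out $\alpha_1$ first, and arrive at an $\mathrm{Ei}$ whose argument $\frac{d_1^\eta(s+\nu)(s+\epsilon_M)}{\rho s}$ is a genuine rational function of $\alpha_2$. You correctly identify that this integral ``falls outside standard tables'' and that ``the real work is to split it cleanly'' --- but that split is exactly the step you do not supply, and it is not achievable in the form (\ref{Q2:21}): the target formula contains only $\mathrm{Ei}$'s with arguments affine in $x^\eta$ and $D^\eta$, which is a signature of the $\theta_U^*\approx 1/\nu$ substitution, not of the exact calculation.

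Two of your structural attributions also do not match where the terms actually come from, which reinforces that the plan would not land on (\ref{Q2:21}). The factor $\kappa_i^{2/\eta-1}$ is not a Jacobian from the substitution $\kappa=(d_1/d_2)^\eta$; the $\kappa_i$ are Gauss--Chebyshev nodes on $(0,1)$ used to approximate the lower incomplete gamma functions $\gamma(\tfrac{2}{\eta},\cdot)$ that arise from the $d_2$-integral ${\cal W}_1=\int_x^D y(e^{-\epsilon_M y^\eta/\rho}-e^{-uy^\eta/x^\eta})\,\mathrm{d}y$, via $\frac{1}{\eta u^{2/\eta}}\gamma(\tfrac{2}{\eta},u)\approx\frac{\pi}{2l\eta}\sum_i|\sin\tfrac{2i-1}{2l}\pi|\,\kappa_i^{2/\eta-1}e^{-u\kappa_i}$. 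Likewise, the constant $\nu+\epsilon_M$ does not originate from your factorization $(s+\nu)(s+\epsilon_M)/s$; in the paper it appears in ${\cal L}_5$ when the lower limit $v\ge\epsilon_M y^\eta/\rho$ of the $|g_2|^2$-integral is shifted to zero, turning $\ln(\nu+\rho v/y^\eta)$ into $\ln(\nu+\epsilon_M+\rho t/y^\eta)$. To repair your argument, insert the high-SNR replacement of $\theta_U^*$ at the outset, then follow the decoupled ${\cal L}_4/{\cal L}_5$ evaluation.
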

\begin{proof}\label{prof:5}
See Appendix \ref{appe:5}.
\end{proof}

\begin{rem}\label{rem:11}
The computational complexity of $R_{S,2}^U$ can be easily obtained as $O(ql)$, which is smaller than that of $R_{S,1}^U$ in our simulation settings.
\end{rem}

\begin{rem}\label{rem:12}
Although the average secrecy unicast throughput $R_{S,2}^U$ is obtained under high SNR, simulation results show that (\ref{Q2:21}) is also accurate for low and moderate SNR.
\end{rem}

\subsection{Performance analysis in OMA systems}
In this subsection, the multicast outage and average secrecy unicast throughput achieved by the OMA scheme with partial CSI are correspondingly derived. Note that the OMA scheme is provided as a benchmark in our paper. As to the OMA scheme, a frame is equally divided into $2$ time slots, while the first one is used for the multi-cast traffic only, and the second one is singly for the unicast traffic. Then the achievable multicast rate in user $U_k$ can be expressed as
\begin{equation}\label{Q4:1}
R_k^M = \frac{1}{2} \log_2 (1+\alpha_k\rho),
\end{equation}
where $\frac{1}{2}$ denotes the time slot ratio for the multicast traffic. In the following, two partial CSI cases, namely imperfect CSI and SOS-based CSI, are carefully considered.

\subsubsection{Imperfect CSI}
The multicast outage of the OMA scheme $(P_{out}^1)^{OMA}$ can be denotes as
\begin{equation}\label{Q4:2}\small
(P_{out}^1)^{OMA} = 1 - Pr\{\hat{\alpha}_k \ge \frac{\lambda_M}{\rho}, \forall k \in {\cal{K}} \},
\end{equation}
where $\lambda_M = 2^{2R_M}-1$.
Therefore, $(P_{out}^1)^{OMA}$ can be easily approximated as
\begin{equation}\label{Q4:3}\small
\begin{aligned}
&(P_{out}^1)^{OMA} = 1 - [1-F_{\hat{\alpha}}(\frac{\lambda_M}{\rho})]^K \cr
&\mathop \approx \limits^{(a)} 1- [\frac{\pi}{cD} \sum \limits_{i=1}^{c} |\sin{\frac{2i-1}{2c}\pi}| x_i \exp{(-\frac{\lambda_M/\rho}{{x_i^{-\eta}}-\sigma_\zeta^2})}]^K,
\end{aligned}
\end{equation}
where derivation $(a)$ results from (\ref{Q3:a4}).

\begin{rem}\label{rem:13}
The computational complexity of $(P_{out}^1)^{OMA}$ is $O(c)$, which is same with that of $P_{out}^1$. As $\lambda_M>\epsilon_M$ and $x_i >0, i=1,\cdots,n$, then $(P_{out}^1)^{OMA}> P_{out}^1$, namely the NOMA scheme achieves a lower multicast outage probability (better outage preformance) than the OMA scheme.
\end{rem}

\begin{rem}\label{rem:14}
As to the perfect CSI, namely $\sigma_\zeta^2 = 0$, $(P_{out}^1)^{OMA}$ can be exactly obtained as
\begin{equation}\label{Q4:4}\small
(P_{out}^1)_{\sigma^2_\zeta = 0}^{OMA} = 1-[\frac{2}{\eta (\frac{\lambda_M}{\rho})^{\frac{2}{\eta}}D^2}\gamma(\frac{2}{\eta},\frac{\lambda_M D^\eta}{\rho})]^K.
\end{equation}
\end{rem}

\begin{rem}\label{rem:15}
Similarly, there is no diversity gain for both the imperfect CSI OMA and perfect CSI OMA schemes.
\end{rem}

As to the average security rate $(R_{S,1}^U)^{OMA}$, it can be easily expressed as
\begin{equation}\label{Q4:5}\small
(R_{S,1}^U)^{OMA} = \frac{1}{2}E\{\log_2(1+\hat{\alpha}_{\pi_1} \rho) - \log_2(1+\hat{\alpha}_{\pi_2} \rho)  \}.
\end{equation}
In the similar way of calculating $\Omega$ in (\ref{Q3:2}), $(R_{S,1}^U)^{OMA}$ can be approximated as
\begin{equation}\label{Q4:6}\small
\begin{aligned}
&(R_{S,1}^U)^{OMA} \approx  \frac{K \pi \rho}{4m\ln 2}  \sum_{u=1}^{m} |\sin \frac{2u-1}{2m}\pi|\cr
&\left\{\frac{1}{\rho\tau_u} - \sum_{r_0+r_1+ \cdots +r_n = K-1} A(r_1,\cdots,r_n) \bar{H}(r_1,\cdots,r_n,u) \right\}, \cr
\end{aligned}
\end{equation}
where
\begin{equation*}\label{Q4:611}\footnotesize
\begin{aligned}
&\bar{H}(r_1,\cdots,r_n,u)= -\frac{\pi}{nD\rho \tau_u} \sum \limits_{t=1}^{n} |\sin{\frac{2t-1}{2n}\pi}| x_t \bar{G}(r_1,\cdots,r_n,u,t)\cr
&+I_2(r_1,\cdots,r_n),\cr
\end{aligned}
\end{equation*}
\begin{equation}\label{Q3:17}\footnotesize
\begin{aligned}
\bar{G}(r_1,\cdots,r_n,u,t)&= 1-\frac{B(r_1,\cdots,r_n,u)}{\rho \tau_u \bar{\mu}_1}\cr
&+e^{\bar{\mu}_1}\text{Ei}(- \bar{\mu}_1)[\bar{\mu}_1-\frac{B(r_1,\cdots,r_n,u)}{\rho\tau_u} ].\cr
\end{aligned}
\end{equation}

\begin{rem}\label{rem:16}
The complexity of $(R_{S,1}^U)^{OMA}$ is $O\left(m\binom{K+n-1}{n}(n+n^2)\right)$, which is close to that of $R_{S,1}^U$. Meanwhile, $(R_{S,1}^U)^{OMA}$ is accurate for whole SNR range.
\end{rem}

\subsubsection{SOS-based CSI}
Similarly, The multicast outage of the OMA scheme $(P_{out}^2)^{OMA}$ can be exactly obtained as
\begin{equation}\label{Q4:21}\footnotesize
\begin{aligned}
&(P_{out}^2)^{OMA} = 1 - Pr\{\alpha_k \ge \frac{\lambda_M}{\rho}, \forall k \in {\cal{K}} \} \mathop = \limits^{(a)} 1 - \prod_{k=1}^{K} \cr
&\left[2k \tbinom{K}{k} \sum \limits_{j=0}^{K-k} \tbinom{K-k}{j} \frac{(-1)^j}{D^{2(k+j)}} \frac{({\frac{\lambda_M}{\rho})}^{-\frac{2(k+j)}{\eta}}}{\eta} \gamma{(\frac{2(k+j)}{\eta},\frac{\lambda_M D^\eta}{\rho})}\right], \cr
\end{aligned}
\end{equation}
where derivation $(a)$ results from (\ref{Q2:a3}).

\begin{rem}
The computational complexity of $(P_{out}^2)^{OMA}$ is $O(\frac{K(K+1)}{2})$, which is same with $P_{out}^2$. Note that $(P_{out}^2)^{OMA}$ is accurate for whole SNR range. Meanwhile, under high SNR condition, we can find that there is also no diversity gain.
\end{rem}

As to the average security rate $(R_{S,2}^U)^{OMA}$, we also consider the special case of $K=2$, that is
\begin{equation}\label{Q4:51}\fontsize{6pt}{5pt}
(R_{S,2}^U)^{OMA} = \frac{1}{2} \text{E}\left[\log_2 (1 + \rho \frac{|g_1|^2}{d_1^\eta}) -\log_2 (1 + \rho \frac{ |g_2|^2}{d_2^\eta}) | \frac{|g_1|^2}{d_1^\eta} \ge \frac{|g_2|^2}{d_2^\eta} \right].
\end{equation}

In the similar way of deriving $R_{S,2}^U$, $(R_{S,2}^U)^{OMA}$ can be calculated as
\begin{equation}\label{Q4:61}\footnotesize
\begin{aligned}
&(R_{S,2}^U)^{OMA} =\frac{\pi}{q D^3\ln2}\sum_{j=1}^q |\sin(\frac{2j-1}{2q}\pi)| x_j \bar{J}_{o,1}(x_j)+ \frac{\pi^2}{ql\eta D^3\ln2}\cr
&\sum_{i=1}^l |\sin(\frac{2i-1}{2l}\pi)| \kappa_i^{\frac{2}{\eta}-1}\left\{\sum_{j=1}^q |\sin(\frac{2j-1}{2q}\pi)| x_j \bar{J}_{o,2}(i,x_j)\right\},
\end{aligned}
\end{equation}
where
\begin{equation}\footnotesize
\begin{aligned}
\bar{J}_{o,1}(x)& =-e^{\frac{x^\eta}{\rho}}\text{Ei}(-\frac{x^\eta}{\rho})(D^2 - x^2 )\cr
\bar{J}_{o,2}(i,x) &=\frac{D^2 x^\eta}{D^\eta\kappa_i+x^\eta}e^{\frac{(D^\eta\kappa_i+x^\eta)}{\rho}}\text{Ei}(-\frac{(D^\eta\kappa_i+x^\eta)}{\rho})].
\end{aligned}
\end{equation}

\begin{rem}
The computational complexity of $(R_{S,2}^U)^{OMA}$ is $O(q+ql)$, which is close to that of $R_{S,2}^U$. Meanwhile, $(R_{S,2}^U)^{OMA}$ is accurate for the whole SNR range.
\end{rem}

\section{Numerical Results}\label{sec:rest}
In this section, numerical results are presented to validate the analytical expressions derived in this paper. In particular, the simulation parameters are set as follows. The disk radius $D$ is $5 m$, the path loss factor $\eta$ is $2$, and the number of users $K$ is $8$. Meanwhile, as to the number of included terms in the Gaussian-Chebyshev integration approximation, we set $c= 50, m = 5, n = 10, l = 100, q = 10$. The channel gain model has been discussed in subsection \ref{sub:sysmod}, and the channel estimation error $\sigma^2_\zeta$ in imperfect CSI is valued as $0.01$ in general if it is not specially pointed out. Note that all the numerical results are acquired by averaging over $1e5$ random channel gains realizations. Finally, in the following figures, ``sim" and ``cal" denote simulation results and derived analytical expression values, respectively.

\begin{figure}
  \centering
  \caption{Performance analysis of the imperfect CSI case vs the SNR $\rho$.}
  \subfigure[Outage performance $P_{out}^1$]{
  \includegraphics[width=0.5\textwidth]{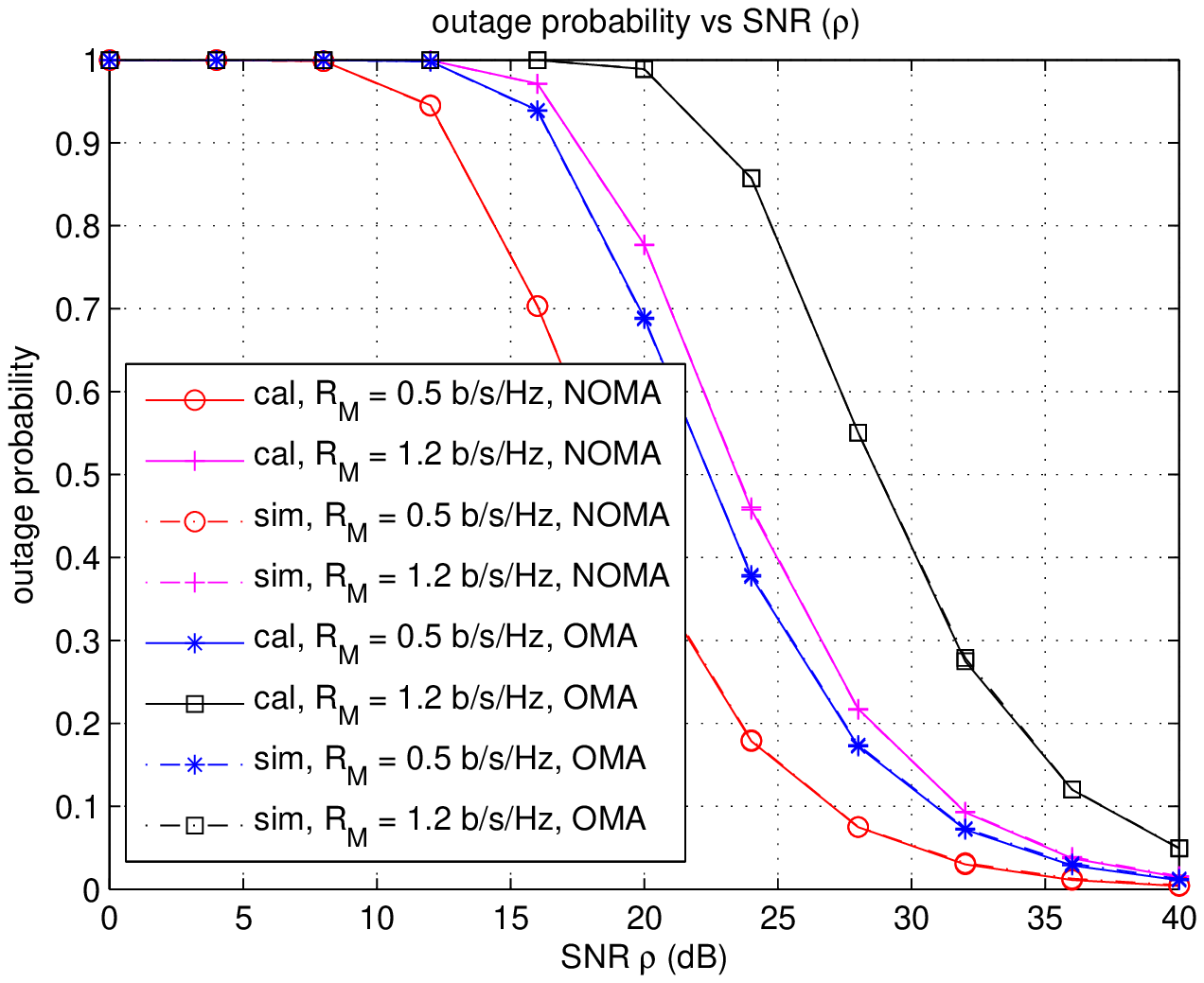}
  }
  \subfigure[Average secrecy unicast throughput performance $R_{S,1}^U$]{
  \includegraphics[width=0.5\textwidth]{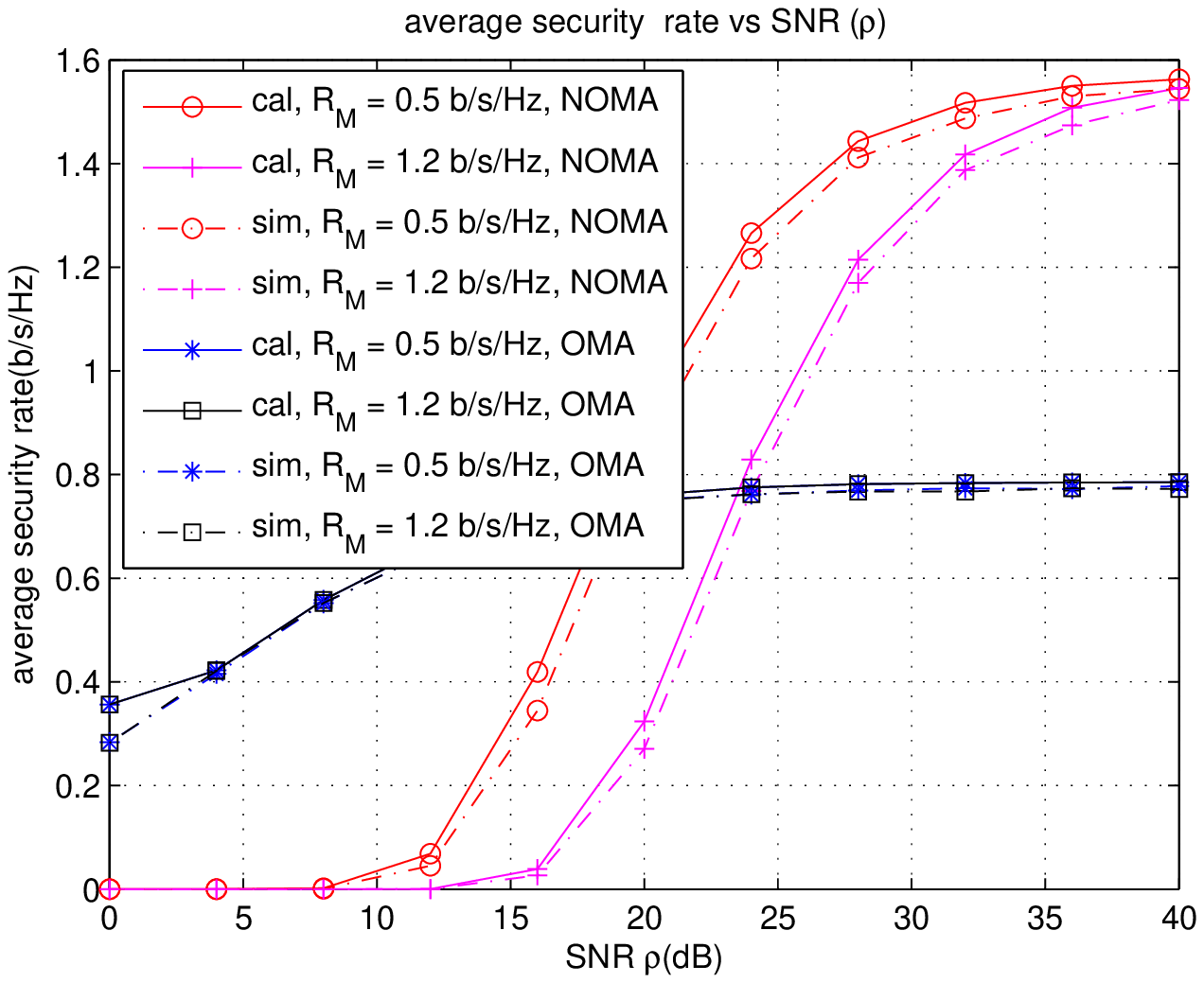}
  }
  \label{res:1}
\end{figure}

Fig. \ref{res:1} investigates the analytical performance with imperfect CSI as a function of the SNR $\rho$, where $R_M = 0.5 b/s/Hz$ and $R_M = 1.2 b/s/Hz$ are considered. On the one hand, the multicast outage probability $P_{out}^1$ is shown in $(a)$, and we can find that the derived expressions, namely (\ref{Q3:a5}) for the NOMA scheme and (\ref{Q4:3}) for the OMA scheme, match exactly with the Monte Carlo simulations. Furthermore, as $\rho$ increases, the multicast outage probability correspondingly decreases, that is the outage performance improves, which accords with common sense. In addition, a large $R_M$ will lead to a large $P_{out}^1$ as the QoS of the multicast traffic becomes strict. Finally, the NOMA scheme is always better than the OMA scheme as the multicast message can be sent in entire wireless resource in our NOMA design. On the other hand, $(b)$ studies the average secrecy unicast throughput $R_{S,1}^U$, and we can see that the analytical approximations ((\ref{Q3:14}) for the NOMA scheme and (\ref{Q4:6}) for the OMA scheme) also match well with the simulations. As discussed in \emph{Remark \ref{rem:5}}, due to the use of Gaussian-Chebyshev integration approximation, the important parameter $n$ should be carefully considered. The analytical accuracy increases as $n$ becomes large, while the computational complexity dramatically increases. Several values of $n$ have been tested and we find that $n=10$ can achieve acceptable accuracy and complexity. Meanwhile, It is easy to obtain that $R_{S,1}^U$ is proportional to $\rho$. Note that $R_{S,1}^U$ can be denoted as a unicast rate gap between the best channel gain user and the second best user. Therefore, the increment of $R_{S,1}^U$ decreases as $\rho$ increases because $\rho$ dominates the rate and the channel gain is less important. In particular, as to the OMA scheme, $(R_{S,1}^U)^{OMA}$ almost remains the same when $\rho \ge 20 \text{dB}$. In addition, $(R_{S,1}^U)^{OMA}$ is independent of $R_M$, which is in line with our OMA scheme, that is the unicast message is singly sent in part time slot. Finally, a important observation is that the NOMA scheme shows great advantages over the OMA scheme in good condition (large $\rho$), but is inferior to the OMA scheme in bad condition (small $\rho$). The reason lies on that under low SNR, the interference from the multicast message is dominant, while the spectral efficiency is in the ascendant under high SNR, thus the characteristics of both the NOMA and OMA schemes are fully validated.

\begin{figure}
  \centering
  \caption{Performance analysis of the imperfect CSI case vs the channel estimation error $\sigma^2_\zeta$.}
  \subfigure[Outage performance $P_{out}^1$]{
  \includegraphics[width=0.5\textwidth]{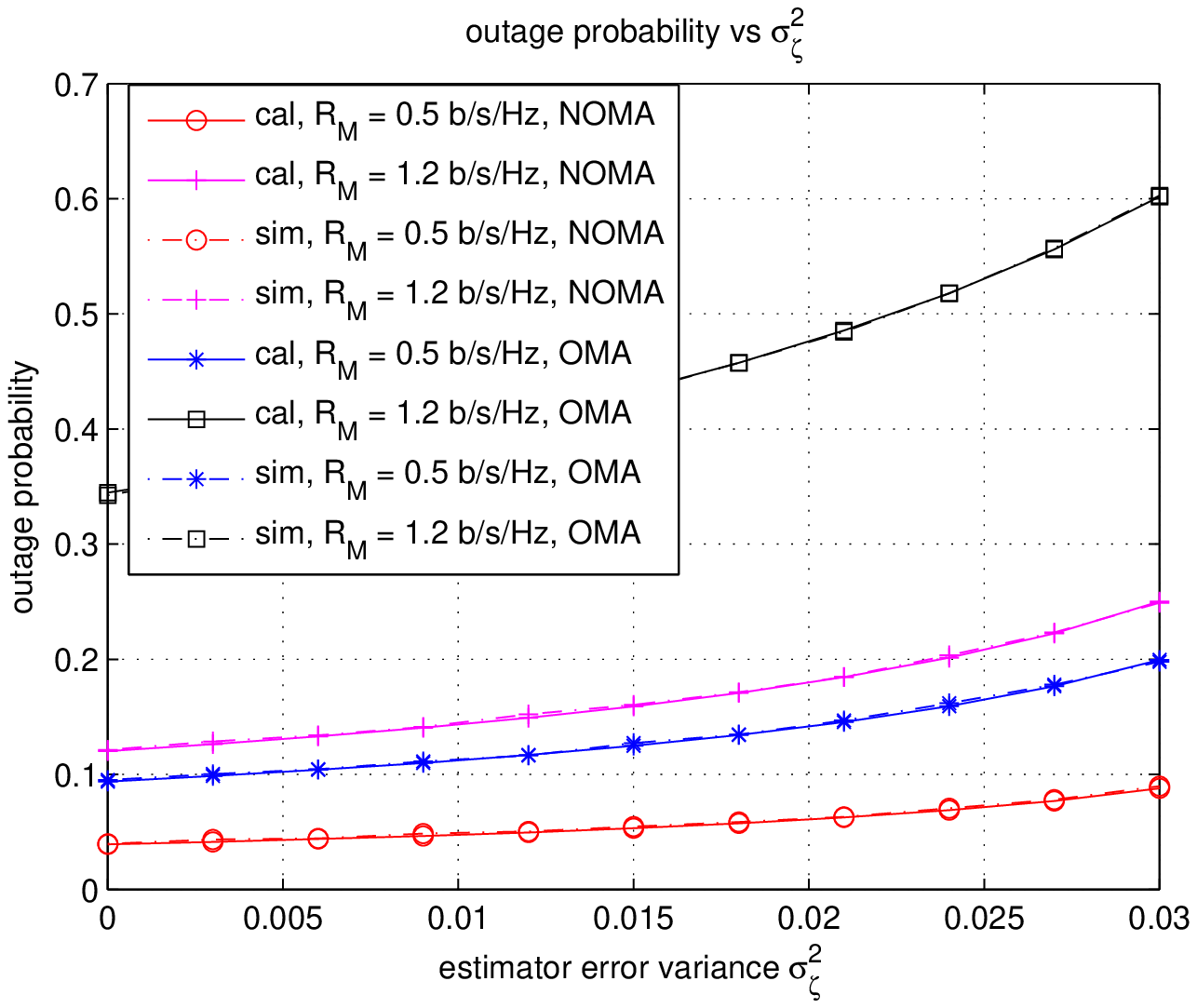}
  }
  \subfigure[Average secrecy unicast throughput performance $R_{S,1}^U$]{
  \includegraphics[width=0.5\textwidth]{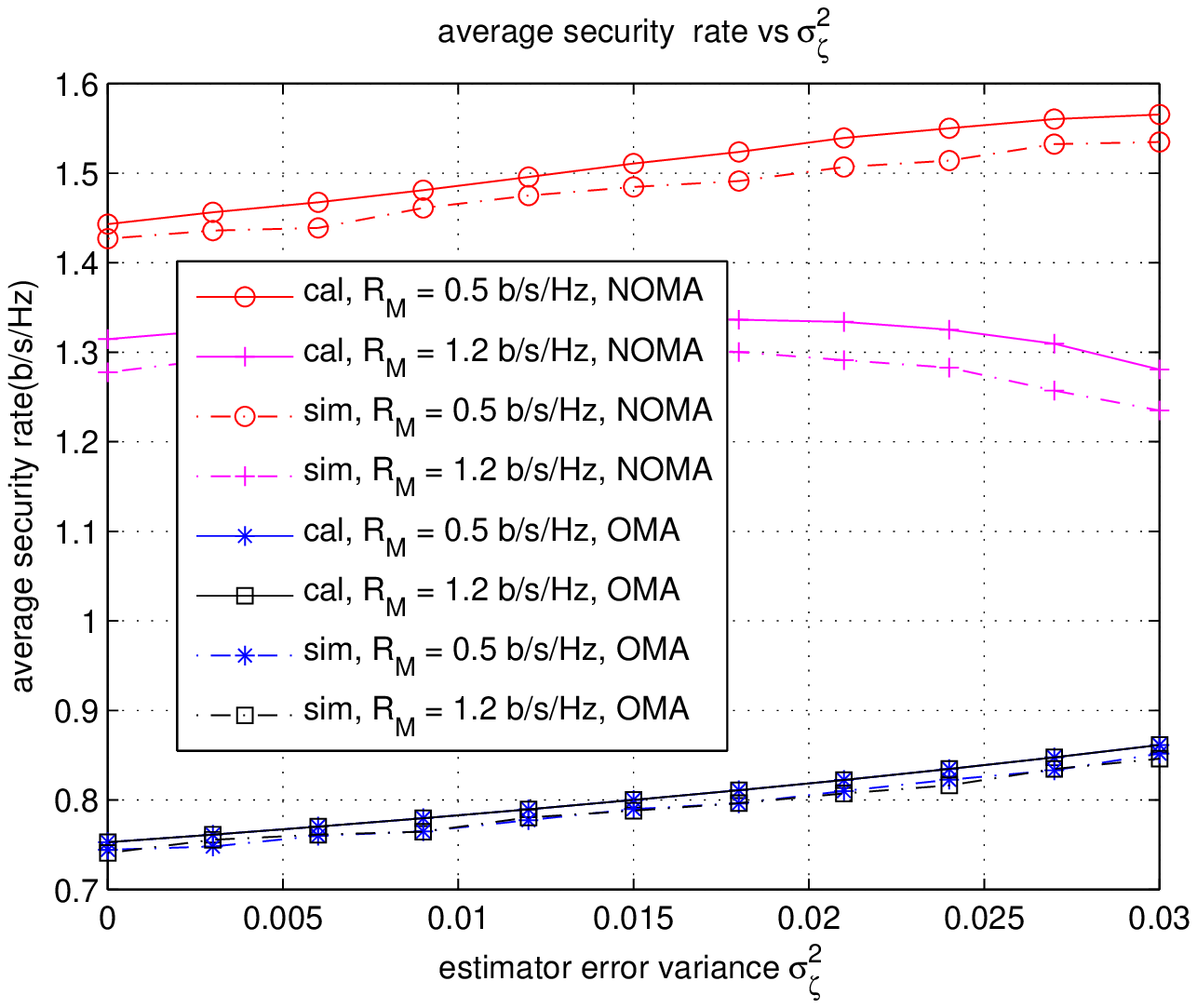}
  }
  \label{res:2}
\end{figure}

Fig. \ref{res:2} illustrates the analytical performance with imperfect CSI as a function of the channel estimation error $\sigma^2_\zeta$, where $\rho = 30 dB$. It is easy to find that the analytical expressions for both the outage probability and average secrecy unicast throughput match well with the simulations. As to the $P_{out}^1$ described in $(a)$, it is proportional to $\sigma^2_\zeta$, namely the multicast outage performance deteriorates when the channel estimation error increases, since higher estimation error brings stronger interference. What is more, the NOMA scheme always achieves better outage performance compared to the OMA scheme. On the other hand, a surprising observation of $R_{S,1}^U$ shown in $(b)$ is that it does not always decrease as the estimation error becomes large. Note that the average secrecy unicast throughput is a unicast rate gap between the best channel gain user and the second best user. Therefore, for the interference-insensitive case, such as the OMA scheme and a good condition in the NOMA scheme, $R_{S,1}^U$ increases with $\sigma^2_\zeta$ as the rate of second best user greatly decreases, while for the interference-sensitive case, such as a bad condition in the NOMA scheme, $R_{S,1}^U$ decreases with $\sigma^2_\zeta$, which results from the dominant decrement in the rate of the best user. Furthermore, as the current settings are a good condition, namely a small $\sigma^2_\zeta$ and a large $\rho$, the NOMA scheme obtains a larger secrecy unicast throughput compared to the OMA scheme. Finally, as to the NOMA scheme, a large $R_M$ will lead to a small $R_{S,1}^U$, while $(R_{S,1}^U)^{OMA}$ remains the same regardless of $R_M$, which can be easily explained according to our system design.

\begin{figure}
  \centering
  \caption{Performance analysis of the imperfect CSI case vs the number of users $K$.}
  \subfigure[Outage performance $P_{out}^1$]{
  \includegraphics[width=0.5\textwidth]{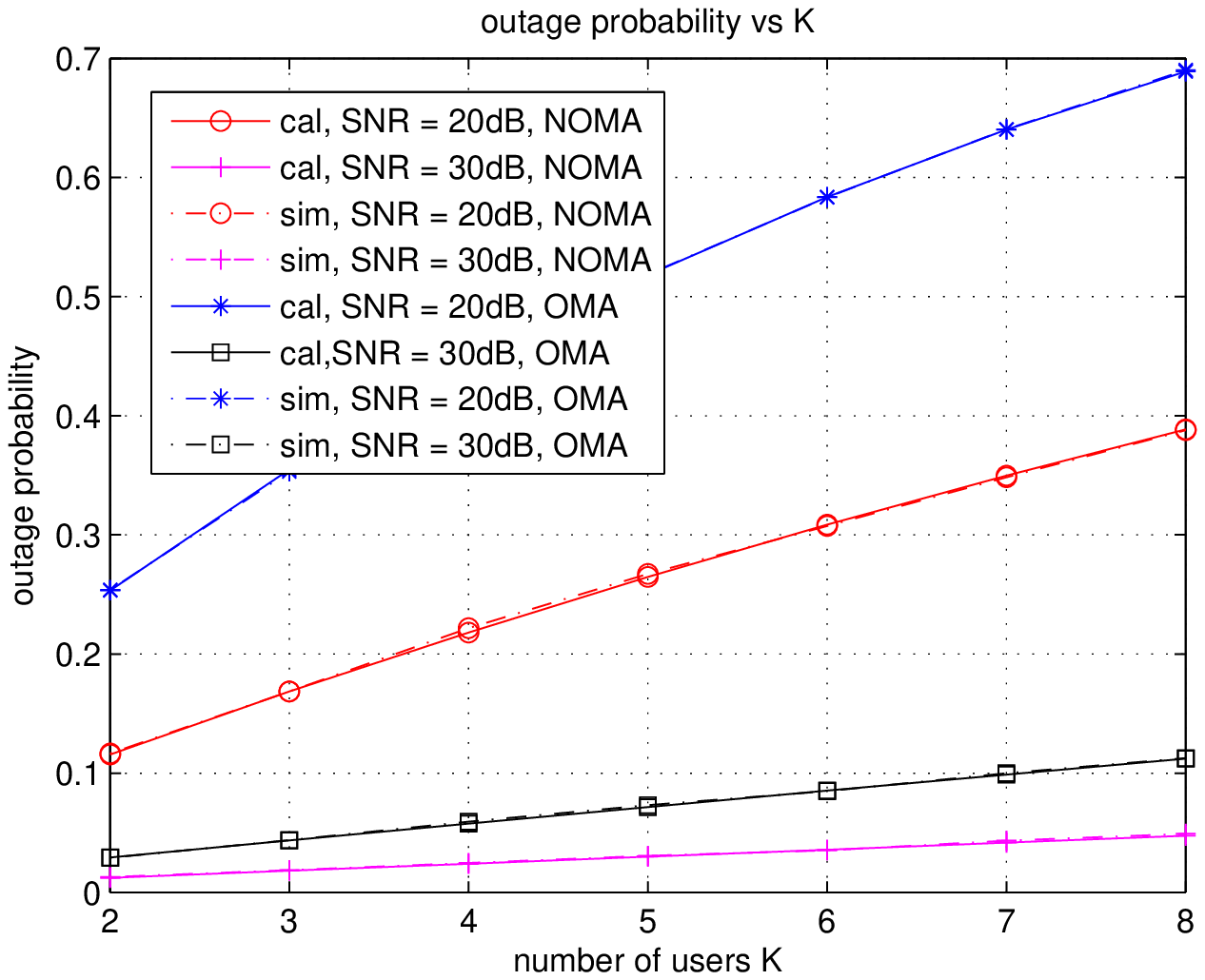}
  }
  \subfigure[Average secrecy unicast throughput performance $R_{S,1}^U$]{
  \includegraphics[width=0.5\textwidth]{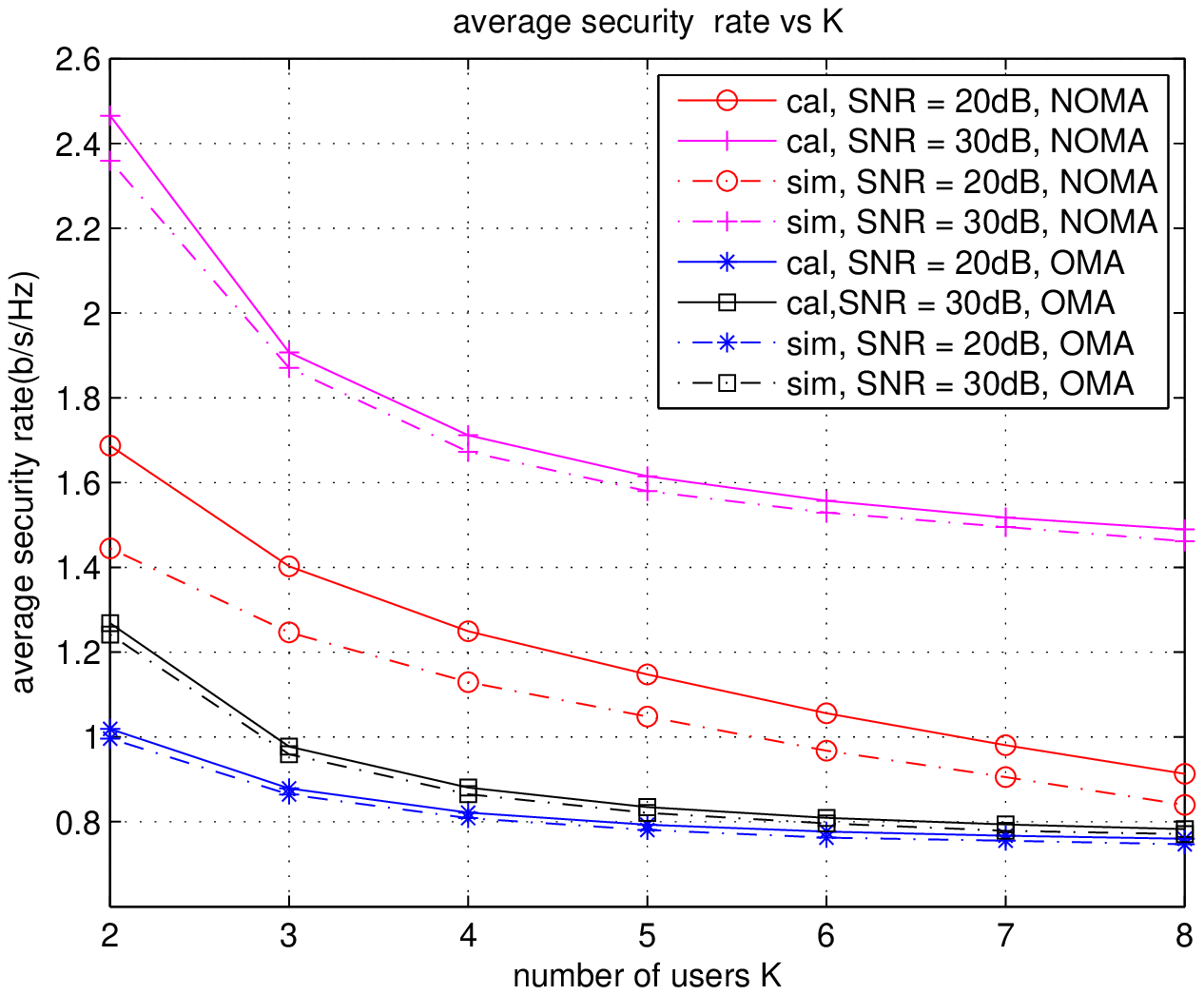}
  }
  \label{res:3}
\end{figure}

Fig. \ref{res:3} studies the analytical performance with imperfect CSI as a function of the number of users $K$, where $R_M = 0.5 b/s/Hz$. Note that both the multicast outage and secrecy unicast throughput performance deteriorates as $K$ increases, which results from the more strict QoS of the multicast traffic when the number of users becomes larger. Meanwhile, no matter the multicast outage or the secrecy unicast throughput performance, the NOMA scheme shows great advantages over the OMA scheme as the current parameter settings are a good condition. Finally, the performance improves as $\rho$ increases, which accords with common sense.

\begin{figure}
  \centering
  \caption{Performance analysis of the SOS-based CSI case vs the SNR $\rho$.}
  \subfigure[Outage performance $P_{out}^2$]{
  \includegraphics[width=0.5\textwidth]{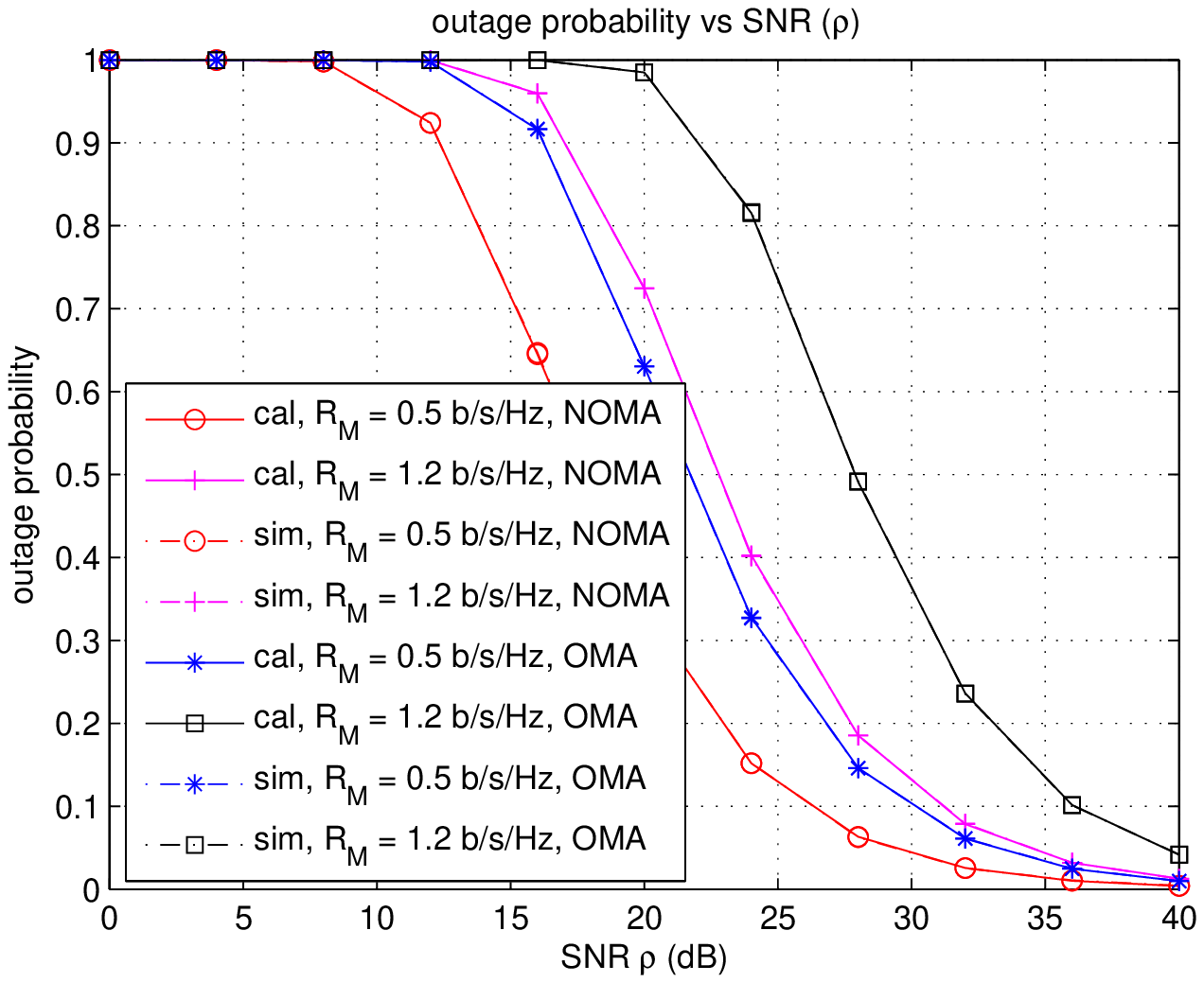}
  }
  \subfigure[Average secrecy unicast throughput performance $R_{S,2}^U$]{
  \includegraphics[width=0.5\textwidth]{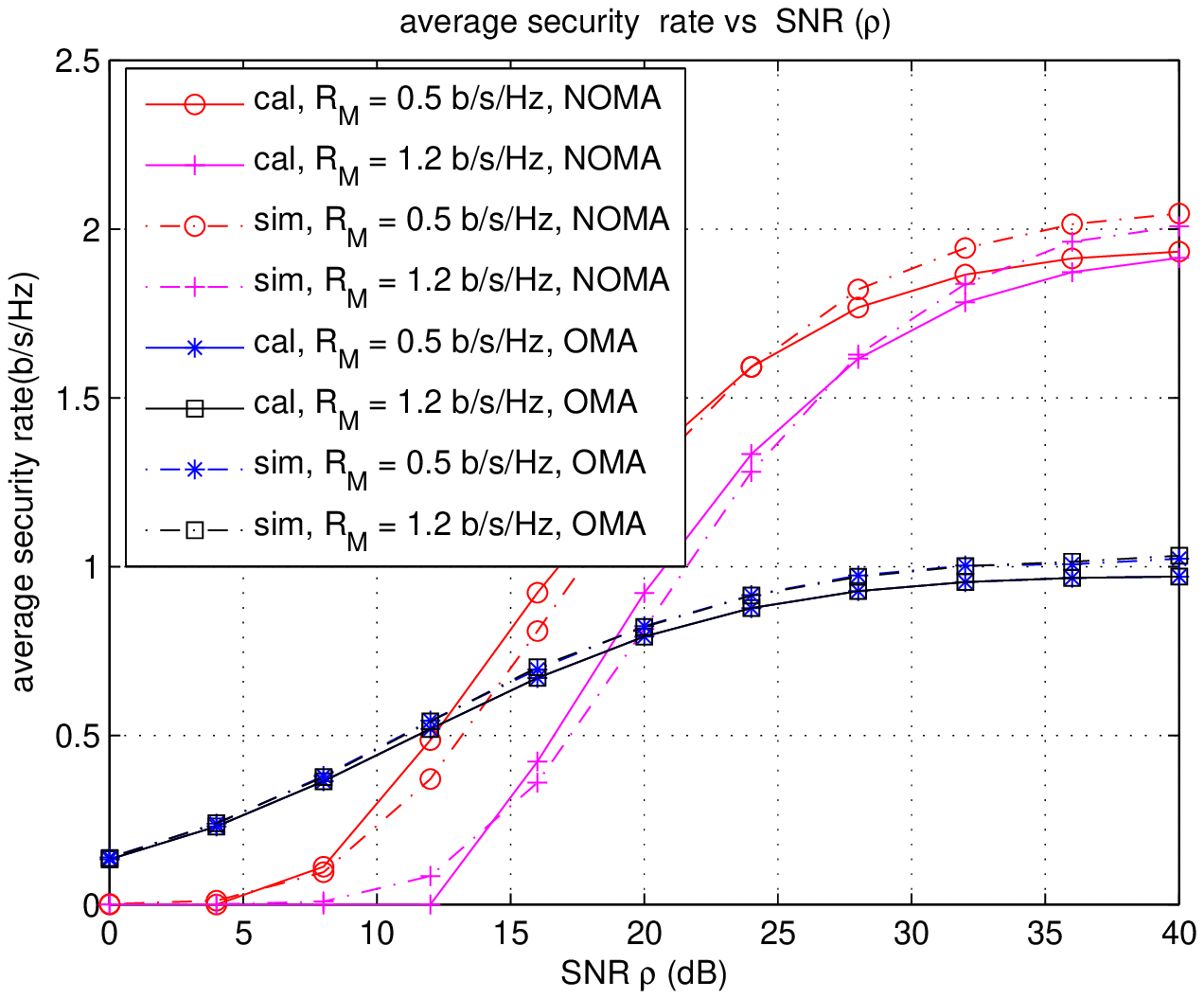}
  }
  \label{res:4}
\end{figure}

Fig. \ref{res:4} shows the analytical performance with SOS-based CSI as a function of the SNR $\rho$, where $R_M = 0.5 b/s/Hz$ and $R_M = 1.2 b/s/Hz$ are considered. Note that for the multicast outage analysis, $K = 8$ is taken into account, while for the average secrecy unicast throughput, we accordingly set $K = 2$. As to the multicast outage probability $P_{out}^2$, it can see that the derived expressions, (\ref{Q2:1}) for the NOMA scheme and (\ref{Q4:21}) for the OMA scheme, match exactly with the Monte Carlo simulations. Furthermore, the outage performance improves as $\rho$ increases, which is in line with common sense. In addition, $P_{out}^2$ is proportional to $R_M$ due to the more strict QoS of the multicast traffic as $R_M$ becomes larger. Finally, the NOMA scheme achieves better outage performance compared to the OMA scheme as the multicast message can only be sent in part of wireless resource in the OMA scheme. On the other hand, as to $R_{S,2}^U$, we can conclude that the analytical approximations, namely (\ref{Q2:21}) for the NOMA scheme and (\ref{Q4:61}) for the OMA scheme, also match well with the simulations. Meanwhile, $R_{S,2}^U$ accordingly increases with $\rho$ and the increment of $R_{S,2}^U$ decreases as $\rho$ becomes large, which has been discussed in Fig. \ref{res:1}. In particular, as to the OMA scheme, $(R_{S,2}^U)^{OMA}$ almost remains the same when $\rho \ge 30 \text{dB}$. In addition, $(R_{S,2}^U)^{OMA}$ is independent of $R_M$, which results from the fact that the unicast message is singly sent in part time slot in the OMA scheme. Finally, an outstanding observation is that the NOMA scheme achieves a larger $R_{S,2}^U$ than the OMA scheme in good condition (large $\rho$), but obtains a smaller $R_{S,2}^U$ compared to the OMA scheme in bad condition (small $\rho$), which has been analyzed in Fig. \ref{res:1} and fully demonstrates the characteristics of the NOMA and OMA schemes.

\begin{figure}
  \centering
  \caption{Performance comparison between imperfect CSI and SOS-based CSI.}
  \subfigure[Outage performance]{
  \includegraphics[width=0.5\textwidth]{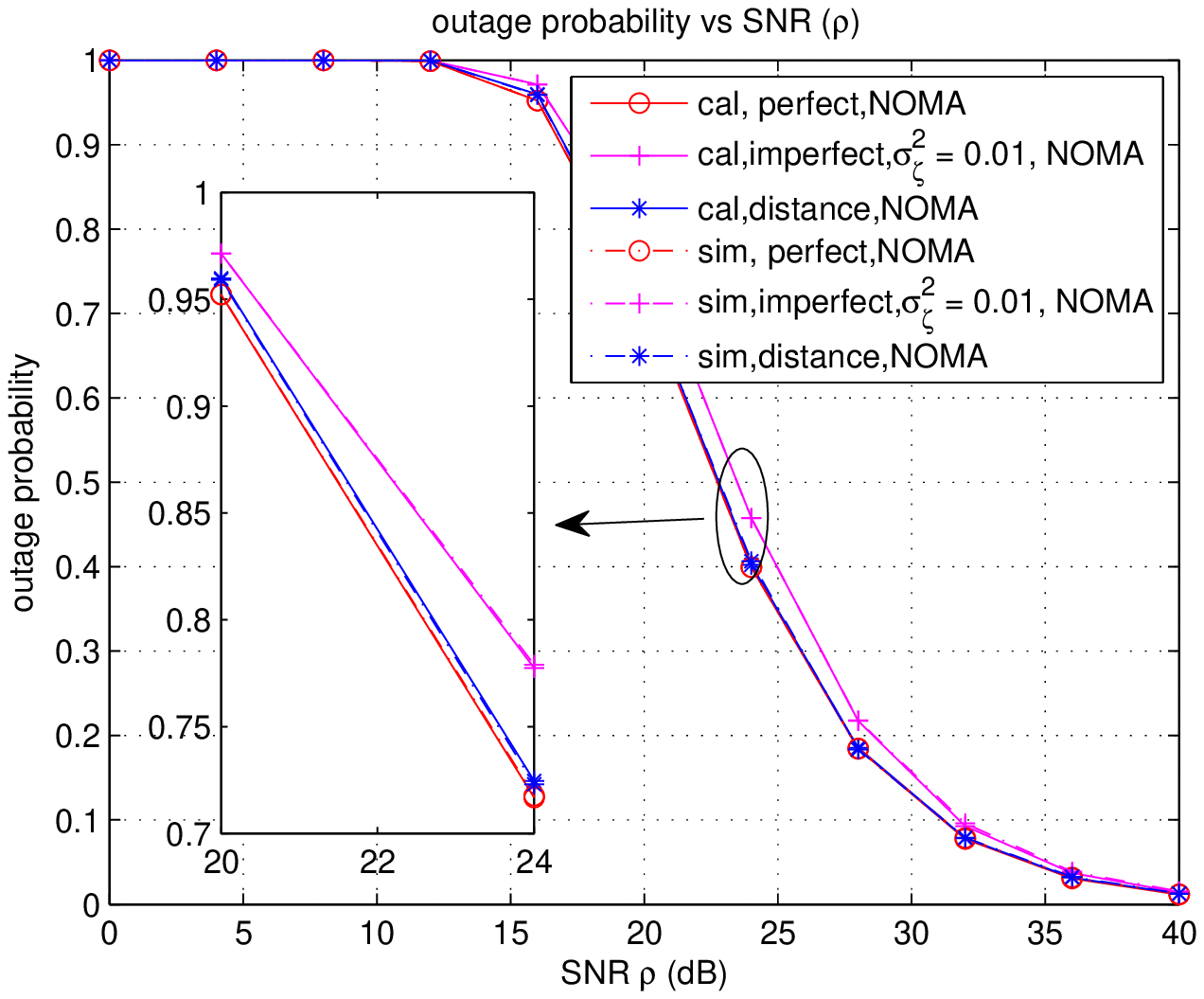}
  }
  \subfigure[Average secrecy unicast throughput performance]{
  \includegraphics[width=0.5\textwidth]{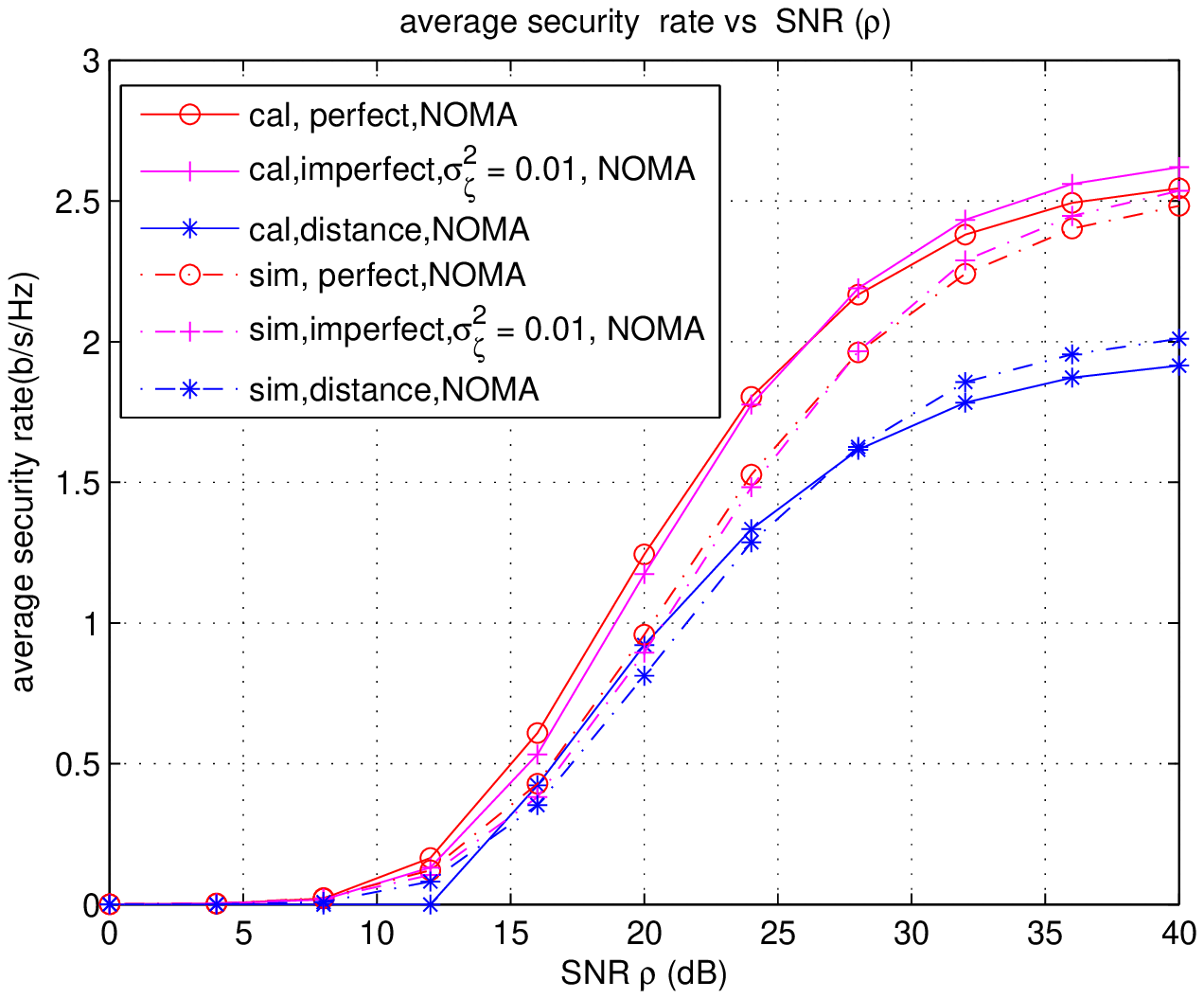}
  }
  \label{res:5}
\end{figure}

In Fig. \ref{res:5}, a performance comparison between imperfect CSI and SOS-based CSI is made, where $R_M = 1.2 b/s/Hz$. Similarly, $K$ is $8$ for the multicast outage performance, while for the average secrecy unicast throughput, we set $K =2$. As to the multicast outage performance, we can see that the perfect CSI NOMA scheme slightly outperforms the SOS-based CSI NOMA scheme, and the SOS-based CSI NOMA scheme is slimly better than the imperfect CSI NOMA scheme with $\sigma^2_\zeta = 0.01$. However, the difference is almost negligible. On the other hand, as to the average secrecy unicast throughput, the imperfect CSI NOMA scheme (no matter $\sigma^2_\zeta = 0$ or $\sigma^2_\zeta = 0.01$) shows great advantages over the SOS-based CSI NOMA scheme under high SNR, and achieves almost the same under low SNR. In addition, it is worth pointing out that the approximate expressions are close to the Monte Carlo simulations.

\section{Conclusion}\label{sec:conl}
In this paper, we have studied the multicast outage and secrecy unicast throughput performance in a downlink single-cell NOMA network with partial CSI, while a mixed multicast and unicast traffic scenario is considered. In particular, two types of partial CSI, namely the imperfect CSI and SOS-based CSI, are taken into account. For the two NOMA schemes, the closed-form approximations of the performance are derived except that the approximate expression of the secrecy unicast throughput in SOS-based CSI only regards two users. The provided numerical results confirm that the derived approximate expressions of both two cases for the multicast outage probability and secrecy unicast throughput match well with the Monte Carlo simulations. Meanwhile, simulation results demonstrate that the NOMA scheme considered in both two cases achieves better outage performance compared to the OMA scheme. However, as to the secrecy unicast throughput, the NOMA scheme shows great advantages over the OMA scheme in good condition (high SNR), but is inferior to the OMA scheme in bad condition (low SNR). Finally, the two NOMA schemes achieve similar performance except that the NOMA scheme with imperfect CSI obtains larger secrecy unicast throughput than that based on SOS under high SNR.

\appendices
\section{Proof of Theorem \ref{thm:1}}\label{appe:1}
First, we prove that the optimal solution can be obtained when the constraint (\ref{2.1}) achieves equality, which can be demonstrated by the contradiction method. Here we assume that the optimal solution is $(\theta_M^*.\theta_U^*)$, where $R_{\pi_K}^M (\theta_M^*,\theta_U^*)> R_M$, namely $\frac{\theta_M^* \alpha_{\pi_K}}{\theta_U^* \alpha_{\pi_K}+\frac{1}{\rho}} > 2^{R_M}-1$. Furthermore, constraint condition (\ref{1.2}) is also guaranteed, that is $\theta_M^* + \theta_U^* \le 1$. It is easy to find that $R_{\pi_K}^M(\theta_M,\theta_U)$ increases with $\theta_M$ and decreases with $\theta_U$. Therefore, we can construct a new solution $(\bar{\theta}_M,\bar{\theta}_U)$ as $\bar{\theta}_M = \theta_M^* - \Delta \theta,\bar{\theta}_U= \theta_U^* + \Delta \theta$, where $\Delta \theta>0$ and $R_{\pi_K}^M(\bar{\theta}_M,\bar{\theta}_U) = R_M$. Note that $\bar{\theta}_M + \bar{\theta}_U = \theta_M^* +\theta_U^* \le 1$, namely constraint (\ref{1.2}) is also satisfied with regard to $(\bar{\theta}_M,\bar{\theta}_U)$. As optimization objective $R_S^U(\theta_U)$ singly increases with $\theta_U$, we can easily conclude that $R_S^U(\bar{\theta}_M,\bar{\theta}_U) > R_S^U(\theta_M^*,\theta_U^*)$, which contradicts our assumption that $(\theta_M^*.\theta_U^*)$ is the optimal solution. Hence, constraint (\ref{2.1}) should achieve equality in order to obtain the optimal solution of problem (\ref{eq:a1}).

Second, as to the optimal solution of problem (\ref{eq:a1}), constraint (\ref{1.2}) should also set equality. In the similar way, we assume that the optimal solution is $(\theta_M^*.\theta_U^*)$, where $\theta_M^* + \theta_U^* < 1$ and $R_{\pi_K}^M (\theta_M^*,\theta_U^*)= R_M$. Correspondingly, we construct a new solution $(\tilde{\theta}_M,\tilde{\theta}_U)$ as $\tilde{\theta}_M = \theta_M^* + \Delta{\theta}_M,\tilde{\theta}_U = \theta_U^* + \Delta{\theta}_U$, where $\Delta{\theta}_M >0, \Delta{\theta_U}>0$, and $R_{\pi_K}^M (\tilde{\theta}_M,\tilde{\theta}_U)= R_M, \tilde{\theta}_M + \tilde{\theta}_U = 1$. Hence $(\Delta \theta_M, \Delta \theta_U)$ need to satisfy
\begin{equation}\label{eq:app1}
\begin{aligned}
&\Delta \theta_M+ \Delta \theta_U = 1- \theta_M^* -\theta_U^* \cr
&\Delta \theta_M  = (2^{R_M}-1)  \Delta \theta_U.
\end{aligned}
\end{equation}
Then we can obtain that $\Delta \theta_M = \frac{(2^{R_M}-1)(1- \theta_M^* -\theta_U^*)}{2^{R_M}}, \Delta \theta_U = \frac{1- \theta_M^* -\theta_U^*}{2^{R_M}}$. As $R_S^U(\theta_U)$ singly increases with $\theta_U$, we can conclude that $R_S^U(\tilde{\theta}_M,\tilde{\theta}_U) > R_S^U(\theta_M^*,\theta_U^*)$, which contradicts our assumption that $(\theta_M^*.\theta_U^*)$ is the optimal solution. Therefore, in order to achieve the optimal solution of problem (\ref{eq:a1}), constraint (\ref{1.2}) should set equality.

In conclusion, the optimal solution of problem (\ref{eq:a1}) can be achieved while both (\ref{1.2}) and (\ref{2.1}) achieve equality, and \emph{Theorem \ref{thm:1}} is completely proved.

\section{Derivation of $P_{out}^1$}\label{appe:2}
In order to derive $P_{out}^1$, we first need to calculate the probability distribution function (PDF) and the cumulative distribution function (CDF) of the unordered channel gain $\hat{\alpha}_k$. As all the users are all uniformly deployed in the disc with radius $D$, it is easy to find that $f_{d_k}(x) = \frac{2x}{D^2},F_{d_k}(x) = \frac{x^2}{D^2},k\in{\cal{K}}$. Since $\alpha_k = \hat{\alpha}_k+\zeta$, then $\hat{\alpha}_k \sim {\cal{CN}}(0,d_k^{-\eta}-\sigma^2_{\zeta})$. Therefore, for the given $d_k$, the conditional probability and cumulative distribution of $\hat{\alpha}_k$ can be achieved as $f_{\hat{\alpha}_k|d_k} (y|d_k)= \frac{1}{d_k^{-\eta}-\sigma_\zeta^2} \exp(-\frac{y}{d_k^{-\eta}-\sigma_\zeta^2}), F_{\hat{\alpha}_k|d_k} (y|d_k) = 1- \exp(-\frac{y}{d_k^{-\eta}-\sigma_\zeta^2})$. Accordingly $f_{\hat{\alpha}_k}(y)$ and $F_{\hat{\alpha}_k}(y)$ can be obtained as
\begin{equation}\label{Q3:a1}\footnotesize
\begin{aligned}
&f_{\hat{\alpha}_k}(y) = f_{\hat{\alpha}}(y) = \int_0^D f_{\hat{\alpha}_k|d_k} (y|d_k) f_{d_k}(x) dx \cr
&= \frac{2}{D^2} \int_0^D \frac{x}{x^{-\eta}-\sigma_\zeta^2} \exp(-\frac{y} {x^{-\eta}-\sigma_\zeta^2})dx.\cr
&F_{\hat{\alpha}_k}(y) = F_{\hat{\alpha}}(y) =\int_0^D F_{\hat{\alpha}_k|d_k} (y|d_k) f_{d_k}(x) dx \cr
&= 1- \frac{2}{D^2} \int_0^D x \exp(-\frac{y}{{x^{-\eta}}-\sigma_\zeta^2}) dx.
\end{aligned}
\end{equation}

Second, based on the $2.1.6$ in \cite{David}, the PDF and CDF of the sorted channel gain $\hat{\alpha}_{\pi_i}$ can be expressed as
\begin{equation}\label{Q3:a3}\footnotesize
\begin{aligned}
&f_{\hat{\alpha}_{\pi_{i}}}(z) = i \tbinom{K}{i} (F_{\hat{\alpha}}(z))^{K-i} (1-F_{\hat{\alpha}} (z))^{i-1} f_{\hat{\alpha}}(z).\cr
F_{\hat{\alpha}_{\pi_i}}(z) &= \int_0^z f_{\hat{\alpha}_{\pi_i}}(x) dx \mathop = \limits^{(a)} i  \tbinom{K}{i} \int_0^{F_{\hat{\alpha}}(z)} \sum \limits_{s=0}^{i-1} \tbinom{i-1}{s} (-1)^s t^{K-i+s} dt  \cr
&= i  \tbinom{K}{i} \sum \limits_{s=0}^{i-1}(-1)^s  \tbinom{i-1}{s} \frac{(F_{\hat{\alpha}}(z))^{K-i+s+1}}{K-i+s+1},
\end{aligned}
\end{equation}
where derivation $(a)$ applies the binomial theorem to $ (1-F_{\hat{\alpha}} (z))^{i-1}$.

Third, we can see that $F_{\hat{\alpha}}(y)$ is difficult to obtain due to the calculation of $I_1(y) = \int_0^D x \exp(-\frac{y} {{x^{-\eta}}-\sigma_\zeta^2}) dx$. Therefore, by employing the Gauss-Chebyshev integration \cite{Hildebrand}, $I_1(y)$ can be approximated as
\begin{equation}\label{eq:app2}\footnotesize
I_1(y) \approx \frac{\pi D}{2c} \sum \limits_{i=1}^c |\sin{\frac{2i-1}{2c}\pi}|x_i \exp{(-\frac{y}{{x_i^{-\eta}}-\sigma_\zeta^2})},
\end{equation}
where $x_i = \frac{D}{2} (1+\cos{(\frac{2i-1}{2c}\pi)})$, and $c$ is the number of terms included in the summation, which controls the approximation accuracy. Thus, $F_{\hat{\alpha}}(y)$ can be accordingly approximated as
\begin{equation}\label{Q3:a4}\footnotesize
F_{\hat{\alpha}}(y) \approx 1-\frac{\pi}{cD} \sum \limits_{i=1}^{c} |\sin{\frac{2i-1}{2c}\pi}| x_i \exp{(-\frac{y}{{x_i^{-\eta}}-\sigma_\zeta^2})} .
\end{equation}

Based on (\ref{Q3:a3}) and (\ref{Q3:a4}), the multicast outage probability $P_{out}^1$ can be approximated as
\begin{equation}\footnotesize
\begin{aligned}
&P_{out}^1 = F_{\hat{\alpha}_{\pi_K}} (\frac{\epsilon_M}{\rho}) \approx K \sum \limits_{s=0}^{K-1} \frac{(-1)^s \tbinom{K-1}{s}}{s+1} \cr
&(1-\frac{\pi}{cD} \sum \limits_{i=1}^{c} |\sin{\frac{2i-1}{2c}\pi}| x_i \exp{(-\frac{\epsilon_M}{\rho (x_i^{-\eta}-\sigma_\zeta^2)})})^{s+1}\cr
& =  1- [\frac{\pi}{cD} \sum \limits_{i=1}^{c} |\sin{\frac{2i-1}{2c}\pi}| x_i  \exp(-\frac{\epsilon_M }{\rho(x_i^{-\eta}-\sigma_\zeta^2)})]^K.
\end{aligned}
\end{equation}

Therefore, \emph{Theorem \ref{thm:2}} is completely proved.

\section{Proof of \emph{Proposition \ref{pro:1}}}\label{appe:2.1}
When $\sigma^2_\zeta = 0$, according to $3.326.4$ in\cite{Zwillinger}, the $I_1(y)$ in (\ref{eq:app2}) can be exactly obtained as
\begin{equation}\label{eq:prop1}\footnotesize
I_1(y) = \frac{1}{\eta y^{\frac{2}{\eta}}}\gamma(\frac{2}{\eta},yD^\eta).
\end{equation}
Hence, $(P_{out}^1)_{\sigma^2_\zeta = 0}$ can be calculated as
\begin{equation}\label{eq:prop2}\footnotesize
\begin{aligned}
(P_{out}^1)_{\sigma^2_\zeta = 0}& = F_{\hat{\alpha}_{\pi_K}} (\frac{\epsilon_M}{\rho})= 1-(1-F_{\hat{\alpha}}(\frac{\epsilon_M}{\rho}))^K\cr
&=1-[\frac{2}{\eta {(\frac{\epsilon_M}{\rho})}^{\frac{2}{\eta}}D^2}\gamma(\frac{2}{\eta},\frac{\epsilon_M D^\eta}{\rho})]^K.
\end{aligned}
\end{equation}

\section{Derivation of $R_{S,1}^U$}\label{appe:3}
First, it is easy to see that $R_{U,out}^1 = 0$ if the multicast outage happens; otherwise, $R_{U,suc}^1 = \log_2{(1+\rho \hat{\alpha}_{\pi_1} \theta_U^*)}- \log_2{(1+\rho \hat{\alpha}_{\pi_2} \theta_U^*)}$, where $\theta_U^* = \frac{\hat{\alpha}_{\pi_K}- \frac{\epsilon_M}{\rho}}{\hat{\alpha}_{\pi_K}(1+\epsilon_M)}$. Therefore, $R_{S,1}^U$ can be expressed as
\begin{equation}\scriptsize
R_{S,1}^U = P_{out}^1 E\{R_{U,out}^1\} + (1-P_{out}^1) E\{R_{U,suc}^1\} = (1-P_{out}^1) E\{R_{U,suc}^1\}.
\end{equation}

Next, according to the $2.2.1$ in\cite{David}, the joint distribution of $\hat{\alpha}_{\pi_i}$ and $\hat{\alpha}_{\pi_j}$ ($i<j$) is
\begin{equation}\label{Q3:1}\scriptsize
\begin{aligned}
f_{\hat{\alpha}_{\pi_i},\hat{\alpha}_{\pi_j}} (x,y) &= \frac{K!}{(K-j)!(j-i-1)!(i-1)!} f_{\hat{\alpha}}(x) (1-F_{\hat{\alpha}}(x))^{i-1}\cr
& [F_{\hat{\alpha}}(x)-F_{\hat{\alpha}}(y)]^{j-i-1} (F_{\hat{\alpha}}(y))^{K-j} f_{\hat{\alpha}}(y).
\end{aligned}
\end{equation}

Here the exact closed-form expression of $R_{S,1}^U$ is difficult to obtain due to the complication of $\theta_U^*$. Then an approximate expression under high SNR is accordingly derived. Note that when $\rho$ is large enough, $\theta_U^* \approx \frac{1}{1+\epsilon_M}$. Therefore, under high SNR, $R_{S,1}^U$ can be approximately written as
\begin{equation}\label{Q3:2}\scriptsize
\begin{aligned}
R_{S,1}^U & \approx (1-P_{out}^1) E\{\log_2(1+\frac{\rho\hat{\alpha}_{\pi_1}}{1+\epsilon_M})-\log_2(1+\frac{\rho\hat{\alpha}_{\pi_2}}{1+\epsilon_M}) \} \cr
& = (1-P_{out}^1) \Omega,
\end{aligned}
\end{equation}
where $\Omega = E\{\log_2 (1+\epsilon_M + \rho\hat{\alpha}_{\pi_1}) - \log_2 (1+\epsilon_M+\rho \hat{\alpha}_{\pi_2}) \}$.

According to (\ref{Q3:1}), the joint distribution of $\hat{\alpha}_{\pi_1}$ and $\hat{\alpha}_{\pi_2}$ can be achieved as
\begin{equation}\label{Q3:5}\scriptsize
\begin{aligned}
f_{\hat{\alpha}_{\pi_1},\hat{\alpha}_{\pi_2}} (x,y) &=   K(K-1) f_{\hat{\alpha}}(x) \{F_{\hat{\alpha}}(y)\}^{K-2}f_{\hat{\alpha}}(y).
\end{aligned}
\end{equation}

Based on the probability theory\cite{Bertsekas}, $\Omega$ can be equally expressed as
\begin{equation}\label{Q3:6}\scriptsize
\begin{aligned}
&\Omega = \int_0^\infty \int_0^x \log_2 (\frac{1+\epsilon_M+\rho x}{1+\epsilon_M +\rho y}) f_{\hat{\alpha}_{\pi_1},\hat{\alpha}_{\pi_2}} (x,y) \text{d}y\text{d}x \cr
&=K(K-1)\int_0^\infty f_{\hat{\alpha}}(x) \underbrace{\int_0^x \log_2 (\frac{1+\epsilon_M+\rho x}{1+\epsilon_M +\rho y}) \{F_{\hat{\alpha}}(y)\}^{K-2} f_{\hat{\alpha}}(y) \text{d}y}_{{\cal{L}}_1} \text{d}x.
\end{aligned}
\end{equation}

Applying the partial integration theorem and Gauss-Chebyshev integration\cite{Hildebrand}, ${\cal{L}}_1$ can be derived as
\begin{equation}\label{Q3:7}\scriptsize
\begin{aligned}
{\cal{L}}_1 &= \int_0^x \log_2 (\frac{1+\epsilon_M+\rho x}{1+\epsilon_M +\rho y}) \text{d} {\frac{\{F_{\hat{\alpha}}(y)\}^{K-1}}{K-1}}\cr
& = \frac{\rho}{(K-1)\ln2} \int_0^x \frac{\{F_{\hat{\alpha}}(y)\}^{K-1}}{1+\epsilon_M+\rho y} \text{d}y \cr
&\approx \frac{\pi \rho}{2m(K-1)\ln2} \sum_{u=1}^{m} |\sin \frac{2u-1}{2m}\pi|\frac{x \{F_{\hat{\alpha}}(x\tau_u)\}^{K-1}}{1+\epsilon_M + \rho x\tau_u},
\end{aligned}
\end{equation}
where $\tau_u = \frac{1}{2}(\cos \frac{2u-1}{2m} \pi +1)$, and $m$ is the number of terms included in the summation, which controls the approximation accuracy.

Correspondingly, $\Omega$ can be refined as
\begin{equation}\label{Q3:9}\scriptsize
\begin{aligned}
\Omega =\frac{K \pi \rho}{2m\ln 2}  \sum_{u=1}^{m} |\sin \frac{2u-1}{2m}\pi| \underbrace{\int_0^\infty \frac{ x \{F_{\hat{\alpha}}(x\tau_u)\}^{K-1}}{1+\epsilon_M +\rho x\tau_u}  f_{\hat{\alpha}}(x) \text{d}x}_{{\cal{L}}_2}.
\end{aligned}
\end{equation}

Applying (\ref{Q3:a4}) and using the multinomial theorem, $\{F_{\hat{\alpha}}(x\tau_u)\}^{K-1}$ can be calculated as
\begin{equation}\label{Q3:10}\scriptsize
\begin{aligned}
&\{F_{\hat{\alpha}}(x\tau_u)\}^{K-1} = \sum_{r_0+r_1+ \cdots +r_n = K-1} \frac{(K-1)!}{r_0!r_1!\cdots r_n!} \cr
&\prod_{t=1}^n [-\frac{\pi}{nD}|\sin{\frac{2t-1}{2n}\pi}| x_t \exp{(- \frac{x\tau_u}{ x_t^{-\eta}-\sigma_\zeta^2} } )]^{r_t}\cr
&=\sum_{r_0+r_1+ \cdots +r_n = K-1} A(r_1,\cdots,r_n)\exp(-B(r_1,\cdots,r_n,u)x ),
\end{aligned}
\end{equation}
where $A(r_1,\cdots,r_n)$ and $B(r_1,\cdots,r_n,u)$ are shown in (\ref{thm3:add4}).

Substituting (\ref{Q3:10}) into (\ref{Q3:9}) and applying the partial integration theorem, ${\cal{L}}_2$ can be derived as
\begin{equation}\label{Q3:11}\fontsize{6.5pt}{5pt}
\begin{aligned}
&{\cal{L}}_2 = \int_0^\infty \frac{x(F_{\hat{\alpha}}(\tau_u x))^{K-1}}{1+\epsilon_M+\rho \tau_u x} \text{d}F_{\hat{\alpha}}(x) = \frac{1}{\rho \tau_u}-\int_0^\infty F_{\hat{\alpha}}(x) \text{d}\frac{x(F_{\hat{\alpha}}(\tau_u x))^{K-1}}{1+\epsilon_M+\rho \tau_u x}\cr
&= \frac{1}{\rho\tau_u} - \sum_{r_0+r_1+ \cdots +r_n = K-1} A(r_1,\cdots,r_n) \cr
&\underbrace{\int_0^\infty e^{-B(r_1,\cdots,r_n,u)x}\left[\frac{1+\epsilon_M}{(1+\epsilon_M+\rho \tau_u x)^2}-\frac{B(r_1,\cdots,r_n,u)x}{1+\epsilon_M+\rho\tau_u x} \right] F_{\hat{\alpha}}(x)\text{d}x }_{{\cal{L}}_3} .
\end{aligned}
\end{equation}

According to the $3.352$, $3.353$ and $3.194$ in\cite{Zwillinger}, we can find that
\begin{equation}\label{Q3:12}\scriptsize
\begin{aligned}
&\int_0^\infty \frac{e^{-\mu x}}{x+\xi}\text{d}x = -e^{\xi \mu} \text{Ei}(-\xi \mu), \text{if } \mu>0 \cr
& \int_0^\infty \frac{e^{-\mu x}}{(x+\xi)^2}\text{d}x = \frac{1}{\xi} +\mu e^{\xi\mu}\text{Ei}(-\xi\mu), \text{if } \mu>0\cr
&\int_0^\infty \frac{1}{(1+\xi x)^2}\text{d}x = \frac{1}{\xi},
\end{aligned}
\end{equation}
where $\text{Ei}(x) = \int_{-\infty}^x \frac{e^t}{t}\text{d}t, \text{for } x<0$.

Therefore, ${\cal{L}}_3$ can be correspondingly calculated as
\begin{equation}\label{Q3:13}\fontsize{6.6pt}{5pt}
\begin{aligned}
&{\cal{L}}_3 =  \int_0^\infty e^{-B(r_1,\cdots,r_n,u)x}\left[\frac{1+\epsilon_M}{(1+\epsilon_M+\rho \tau_u x)^2}-\frac{B(r_1,\cdots,r_n,u)x}{1+\epsilon_M+\rho\tau_u x} \right] \text{d}x\cr
&-\frac{\pi}{nD} \sum \limits_{t=1}^{n} |\sin{\frac{2t-1}{2n}\pi}| x_t  \cr
&\int_0^\infty e^{-\rho \tau_u \bar{\mu}_1 x}\left[\frac{1+\epsilon_M}{(1+\epsilon_M+\rho \tau_u x)^2}-\frac{B(r_1,\cdots,r_n,u)x}{1+\epsilon_M+\rho\tau_u x} \right] \text{d}x\cr
&=H(r_1,\cdots,r_n,u),\cr
\end{aligned}
\end{equation}
where $H(r_1,\cdots,r_n,u)$ is defined in (\ref{thm3:add4}).

Finally, substituting (\ref{Q3:13}) and (\ref{Q3:11}) into (\ref{Q3:9}), $R_{S,1}^U$ can be obtained as
\begin{equation}\small
\begin{aligned}
&R_{S,1}^U = (1-P_{out}^1) \frac{K \pi \rho}{2m\ln 2}  \sum_{u=1}^{m} |\sin \frac{2u-1}{2m}\pi|\cr
&\left\{\frac{1}{\rho\tau_u} - \sum_{r_0+r_1+ \cdots +r_n = K-1} A(r_1,\cdots,r_n) H(r_1,\cdots,r_n,u) \right\}. \cr
\end{aligned}
\end{equation}

Therefore, \emph{Theorem \ref{thm:3}} is completely proved.

\section{Derivation of $P_{out}^2$}\label{appe:4}
In order to derive $P_{out}^2$, we first need to calculate the PDF and CDF of the channel gain $\alpha_k$. Note that the distance $d$ from an arbitrary user has the PDF and CDF as $f_{d}(x) = \frac{2x}{D^2}$, $F_{d}(x) = \frac{x^2}{D^2}$. Since $d_1 \le d_2 \le \cdots \le d_K$, according to the order statistics\cite{David}, the PDF of dinstance $d_k$ is
\begin{equation}\label{Q2:a1}\footnotesize
\begin{aligned}
f_{d_k}(x) &= k \tbinom{K}{k} (F_d(x))^{k-1}(1-F_d(x))^{K-k} f_d(x)\cr
 &= 2 k \tbinom{K}{k} \frac{x^{2k-1}}{D^{2k}} (1-\frac{x^2}{D^2})^{K-k} \cr
&\mathop = \limits^{(a)} 2k \tbinom{K}{k} \sum \limits_{j=0}^{K-k} \tbinom{K-k}{j} (-1)^j \frac{x^{2(k+j)-1}}{D^{2(k+j)}},
\end{aligned}
\end{equation}
where derivation $(a)$ applies the binomial theorem. Correspondingly, the CDF of $\alpha_k$ can be evaluated as
\begin{equation}\label{Q2:a3}\footnotesize
\begin{aligned}
&F_{\alpha_k}(z) = Pr\{|g_k|^2 d_k^{-\eta} \le z\} = \int_0^D [1-\exp(-z x^\eta)] f_{d_k}(x) dx \cr
&= 1 - 2k \tbinom{K}{k} \sum \limits_{j=0}^{K-k} \tbinom{K-k}{j} \frac{(-1)^j}{D^{2(k+j)}} \int_0^D e^{-zx^\eta} x^{2(k+j)-1} dx \cr
&=1 - 2k \tbinom{K}{k} \sum \limits_{j=0}^{K-k} \tbinom{K-k}{j} \frac{(-1)^j}{D^{2(k+j)}} \frac{z^{-\frac{2(k+j)}{\eta}}}{\eta} \gamma{(\frac{2(k+j)}{\eta},zD^\eta)}.
\end{aligned}
\end{equation}

Therefore, the multicast outage probability $P_{out}^2$ can be easily calculated as
\begin{equation}\footnotesize
\begin{aligned}
&P_{out}^2 = 1 - Pr\{{\alpha}_k \ge \frac{\epsilon_M}{\rho}, \forall k \in {\cal{K}} \} \mathop = \limits^{(a)} 1- \prod_{k=1}^{K} \cr
&\left[2k \tbinom{K}{k} \sum \limits_{j=0}^{K-k} \tbinom{K-k}{j} \frac{(-1)^j}{D^{2(k+j)}} \frac{({\frac{\epsilon_M}{\rho})}^{-\frac{2(k+j)}{\eta}}}{\eta} \gamma{(\frac{2(k+j)}{\eta},\frac{\epsilon_M D^\eta}{\rho})}\right],
\end{aligned}
\end{equation}
where derivation $(a)$ results from (\ref{Q2:a3}).

Hence \emph{Theorem \ref{thm:4}} is completely proved.

\section{Derivation of $R_{S,2}^U$ for $K=2$}\label{appe:5}
As $d_1<d_2$ and the BS does not know the exact channel gain information, we assume that the unicast receiver is selected as $U_1$. Therefore, if $\alpha_1<\alpha_2$, namely $\frac{|g_1|^2}{d_1^\eta} < \frac{|g_2|^2}{d_2^\eta}$, $R_U^2 = 0$; otherwise, if $\alpha_1\ge \alpha_2\ge \frac{\epsilon_M}{\rho}$, that is $\frac{|g_1|^2}{d_1^\eta} \ge \frac{|g_2|^2}{d_2^\eta}\ge \frac{\epsilon_M}{\rho}$, then $\theta^*_U = \frac{\alpha_2 - \frac{\epsilon_M}{\rho}}{\alpha_2 (1+\epsilon_M)}$, and $R_U^2 = \log_2 (1+\rho \alpha_1 \theta^*_U) - \log_2 (1+\rho \alpha_2 \theta^*_U)$. The exact closed-form expression of $R_{S,2}^U$ is difficult to obtain. Here an approximate expression under high
SNR (large $\rho$) is derived. Note that when $\rho$ is large enough, $\theta_U^* \approx \frac{1}{1+\epsilon_M}$. Similarly, $R_{S,2}^U$ can be calculated as
\begin{equation}\label{Q2:9}\fontsize{7pt}{5pt}
\begin{aligned}
R_{S,2}^U = \text{E}\left[\log_2 (\nu + \rho \frac{|g_1|^2}{d_1^\eta}) -\log_2 (\nu + \rho \frac{ |g_2|^2}{d_2^\eta}) | \frac{|g_1|^2}{d_1^\eta} \ge \frac{|g_2|^2}{d_2^\eta}\ge \frac{\epsilon_M}{\rho} \right].
\end{aligned}
\end{equation}

As discussed above, it is easy to see that the PDF of $|g_k|^2$ follows exponential distribution with zero mean and unit variance, namely $f_{|g_k|^2}(x) \sim \exp(-x)$. Furthermore, according to the $2.2.1$ in\cite{David} (described in (\ref{Q3:1})), the joint PDF of $d_1$ and $d_2$ is $f_{d_1,d_2}(x,y) = \frac{8xy}{D^4},0<x<y<D$. Hence, $R_{S,2}^U$ can be derived as
\begin{equation}\label{Q2:10}\fontsize{6.5pt}{5pt}
\begin{aligned}
&R_{S,2}^U = \frac{8}{D^4 \ln2} \int_{x=0}^{D} \int_{u=0}^\infty \int_{y=x}^D \int_{v=\frac{\epsilon_M y^\eta}{\rho}}^{\frac{uy^\eta}{x^\eta}} \cr
&\left[\ln (\nu + \rho \frac{u}{x^\eta}) -\ln (\nu + \rho \frac{ v}{y^\eta})\right] xy e^{-u} e^{-v} \text{d}v \text{d}y \text{d}u \text{d}x \cr
& = \frac{8}{D^4 \ln2} \underbrace{\int_{x=0}^{D} \int_{u=0}^\infty \int_{y=x}^D \int_{v=\frac{\epsilon_M y^\eta}{\rho}}^{\frac{uy^\eta}{x^\eta}} \ln (\nu + \rho \frac{u}{x^\eta})  xy e^{-u} e^{-v} \text{d}v \text{d}y \text{d}u \text{d}x}_{{\cal{L}}_4}  \cr
&- \frac{8}{D^4 \ln2} \underbrace{\int_{y=0}^{D} \int_{v=\frac{\epsilon_M y^\eta}{\rho}}^\infty \int_{x=0}^y \int_{u=\frac{vx^\eta}{y^\eta}}^{\infty}\ln (\nu + \rho \frac{ v}{y^\eta}) xy e^{-u} e^{-v} \text{d}u \text{d}x \text{d}v \text{d}y}_{{\cal{L}}_5}.
\end{aligned}
\end{equation}

As to ${\cal{L}}_4$, ${\cal{W}}_1 = \int_{y=x}^D \int_{v=\frac{\epsilon_M y^\eta}{\rho}}^{\frac{uy^\eta}{x^\eta}} ye^{-v}\text{d}v\text{d}y$ can be first refined as
\begin{equation}\label{Q2:11}\footnotesize
\begin{aligned}
{\cal{W}}_1 &= \int_{y=x}^D y(e^{-\frac{\epsilon_M y^\eta}{\rho}}-e^{-\frac{uy^\eta}{x^\eta}})\text{d}y = -\frac{x^2}{\eta u^{\frac{2}{\eta}}} \gamma(\frac{2}{\eta},\frac{uD^\eta}{x^\eta}) \cr
&+ \frac{x^2}{\eta u^{\frac{2}{\eta}}} \gamma(\frac{2}{\eta},u)+\frac{1}{\eta}(\frac{\rho}{\epsilon_M})^{\frac{2}{\eta}} \left[\gamma(\frac{2}{\eta},\frac{\epsilon_M D^\eta}{\rho}) -\gamma(\frac{2}{\eta},\frac{\epsilon_M x^\eta}{\rho}) \right].
\end{aligned}
\end{equation}
Note that ${\cal{L}}_4$ needs to calculate double integral over ${\cal{W}}_1$. In order to obtain more insights on ${\cal{L}}_4$, by applying the Gauss-Chebyshev Integration to the lower incomplete gamma function, ${\cal{W}}_1$ can be approximated as
\begin{equation}\label{Q2:12}\footnotesize
\begin{aligned}
{\cal{W}}_1 & \approx \frac{\pi}{2l\eta}\sum_{i=1}^l |\sin(\frac{2i-1}{2l}\pi)| \kappa_i^{\frac{2}{\eta}-1} \cr
&\left[D^2 e^{-\frac{\epsilon_M D^\eta}{\rho}\kappa_i}- x^2 e^{-\frac{\epsilon_M x^\eta}{\rho}\kappa_i}+x^2e^{-u\kappa_i}-D^2e^{-\frac{uD^\eta}{x^\eta}\kappa_i} \right],
\end{aligned}
\end{equation}
where $\kappa_i = \frac{1}{2}(1+\cos(\frac{2i-1}{2l}\pi))$, and $l$ is the number of terms included in the summation, which controls the approximation accuracy.

Next, ${\cal{W}}_2 = \int_{u=0}^\infty \ln (\nu + \rho \frac{u}{x^\eta}) e^{-u} {\cal{W}}_1  \text{d}u$ can be derived as
\begin{equation}\label{Q2:13}\footnotesize
\begin{aligned}
{\cal{W}}_2 &=\frac{\pi}{2l\eta}\sum_{i=1}^l |\sin(\frac{2i-1}{2l}\pi)| \kappa_i^{\frac{2}{\eta}-1} [\int_0^\infty (D^2 e^{-\frac{\epsilon_M D^\eta}{\rho}\kappa_i}\cr
&- x^2 e^{-\frac{\epsilon_M x^\eta}{\rho}\kappa_i}+x^2e^{-u\kappa_i}-D^2e^{-\frac{uD^\eta}{x^\eta}\kappa_i}) \ln (\nu + \rho \frac{u}{x^\eta}) e^{-u} \text{d}u].
\end{aligned}
\end{equation}
According to $4.337$ in\cite{Zwillinger}, we can find that $\int_0^\infty \ln (1+\xi x) e^{-\mu x} \text{d} x = -\frac{1}{\mu} e^{\frac{\mu}{\xi}}\text{Ei}(-\frac{\mu}{\xi}), \text{for } \mu>0$. Therefore,
\begin{equation}\label{Q2:14}\footnotesize
\begin{aligned}
\int_0^\infty \ln (\nu+ax) e^{-\mu x} \text{d} x = \frac{\ln \nu}{\mu} -\frac{1}{\mu} e^{\frac{\nu \mu}{a}}\text{Ei}(-\frac{\nu \mu}{a}), \text{for } \mu>0.
\end{aligned}
\end{equation}
Correspondingly, ${\cal{W}}_2$ can be calculated as
\begin{equation}\label{Q2:15}\fontsize{7pt}{5pt}
\begin{aligned}
&{\cal{W}}_2 =  \frac{\pi}{2l\eta}\sum_{i=1}^l |\sin(\frac{2i-1}{2l}\pi)| \kappa_i^{\frac{2}{\eta}-1}\{(D^2 e^{-\frac{\epsilon_M D^\eta}{\rho}\kappa_i}- x^2 e^{-\frac{\epsilon_M x^\eta}{\rho}\kappa_i})\cr
&\times [\ln \nu-e^{\frac{\nu x^\eta}{\rho}}\text{Ei}(-\frac{\nu x^\eta}{\rho}) ] +x^2[-\frac{1}{\kappa_i+1}e^{\frac{\nu(\kappa_i+1)x^\eta}{\rho}}\text{Ei}(-\frac{\nu(\kappa_i+1)x^\eta}{\rho})\cr
&+\frac{\ln \nu}{\kappa_i+1}]-D^2[- \frac{x^\eta}{D^\eta\kappa_i+x^\eta}e^{\frac{\nu(D^\eta\kappa_i+x^\eta)}{\rho}}\text{Ei}(-\frac{\nu(D^\eta\kappa_i+x^\eta)}{\rho})\cr
&+\frac{x^\eta \ln \nu}{D^\eta\kappa_i+x^\eta} ]  \}.
\end{aligned}
\end{equation}
According to the $3.381$ in\cite{Zwillinger}, namely
\begin{equation}\label{Q2:151}\footnotesize
\begin{aligned}
\int_0^u x^m e^{-\xi x^n}\text{d}x = \frac{\gamma(\frac{m+1}{n},\xi u^n)}{n\xi^{\frac{m+1}{n}}}, \text{for } u>0,n>0,\xi>0,
\end{aligned}
\end{equation}
${\cal{L}}_4 = \int_0^D x{\cal{W}}_2  \text{d}x$ can be calculated as
\begin{equation}\label{Q2:16}\footnotesize
\begin{aligned}
{\cal{L}}_4& =\frac{\pi}{2l\eta}\sum_{i=1}^l |\sin(\frac{2i-1}{2l}\pi)| \kappa_i^{\frac{2}{\eta}-1} J(i) \ln \nu \cr
&+\frac{\pi}{2l\eta}\sum_{i=1}^l |\sin(\frac{2i-1}{2l}\pi)| \kappa_i^{\frac{2}{\eta}-1}\underbrace{\int_0^D x J_1(i,x)\text{d}x}_{{\cal{W}}_3},
\end{aligned}
\end{equation}
where
\begin{equation}\label{Q2:161}\footnotesize
\begin{aligned}
&J_1(i,x) = -e^{\frac{\nu x^\eta}{\rho}}\text{Ei}(-\frac{\nu x^\eta}{\rho})(D^2 e^{-\frac{\epsilon_M D^\eta}{\rho}\kappa_i}- x^2 e^{-\frac{\epsilon_M x^\eta}{\rho}\kappa_i})\cr
&-\frac{x^2}{\kappa_i+1}e^{\frac{\nu (\kappa_i+1)x^\eta}{\rho}}\text{Ei}(-\frac{\nu(\kappa_i+1)x^\eta}{\rho})\cr
&-D^2[\frac{x^\eta \ln \nu}{D^\eta\kappa_i+x^\eta} - \frac{x^\eta}{D^\eta\kappa_i+x^\eta}e^{\frac{\nu(D^\eta\kappa_i+x^\eta)}{\rho}}\text{Ei}(-\frac{\nu(D^\eta\kappa_i+x^\eta)}{\rho})].
\end{aligned}
\end{equation}

By using the Gaussian-Chebyshev integration, ${\cal{W}}_3$ can be approximated as
\begin{equation}\label{Q2:17}
\begin{aligned}
{\cal{W}}_3 \approx \frac{D\pi}{2q}\sum_{j=1}^q |\sin(\frac{2j-1}{2q}\pi)| x_j J(i,x_j),
\end{aligned}
\end{equation}
where $x_j = \frac{D}{2}(1+\cos(\frac{2j-1}{2q}\pi))$, and $q$ is the number of terms included in the summation, which controls the approximation accuracy.

Therefore, ${\cal{L}}_4 $ can be obtained as
\begin{equation}\label{Q2:18}\footnotesize
\begin{aligned}
{\cal{L}}_4 &= \frac{\pi}{2l\eta}\sum_{i=1}^l |\sin(\frac{2i-1}{2l}\pi)| \kappa_i^{\frac{2}{\eta}-1}\{ J(i) \ln \nu \cr
&+ \frac{D\pi}{2q}\sum_{j=1}^q |\sin(\frac{2j-1}{2q}\pi)| x_j J(i,x_j) \}.
\end{aligned}
\end{equation}

As to ${\cal{L}}_5$, similarly ${\cal{W}}_4 = \int_{x=0}^y \int_{u=\frac{vx^\eta}{y^\eta}}^\infty xe^{-u}\text{d}u\text{d}x$ can be first calculated as
\begin{equation}\label{Q2:19}\footnotesize
\begin{aligned}
{\cal{W}}_4 &= \int_{0}^y x e^{-\frac{vx^\eta}{y^\eta}} \text{d}x = \frac{y^2}{v^{\frac{2}{\eta}}\eta} \gamma(\frac{2}{\eta},v) \cr
& \mathop \approx \limits^{(a)} \frac{\pi}{2l\eta} \sum_{i=1}^l |\sin(\frac{2i-1}{2l}\pi)| \kappa_i^{\frac{2}{\eta}-1}y^2 e^{-\kappa_i v},
\end{aligned}
\end{equation}
where derivation $(a)$ results from the Gauss-Chebyshev integration.

Then ${\cal{W}}_5= \int_{\frac{\epsilon_M y^\eta}{\rho}}^\infty \ln(\nu+\rho\frac{v}{y^\eta}){\cal{W}}_4 e^{-v}\text{d}v$  can be derived as

\begin{equation}\label{Q2:20}\footnotesize
\begin{aligned}
{\cal{W}}_5 &= \frac{\pi}{2l\eta} \sum_{i=1}^l |\sin(\frac{2i-1}{2l}\pi)| \kappa_i^{\frac{2}{\eta}-1}y^2 \int_{\frac{\epsilon_M y^\eta}{\rho}}^\infty \ln(\nu+\rho\frac{v}{y^\eta}) e^{-(\kappa_i+1) v} \text{d}v \cr
&= \frac{\pi}{2l\eta} \sum_{i=1}^l |\sin(\frac{2i-1}{2l}\pi)| \kappa_i^{\frac{2}{\eta}-1}y^2 e^{-\frac{\epsilon_M y^\eta (\kappa_i+1)}{\rho}}\cr
&\int_0^\infty \ln(\nu+\epsilon_M+\frac{\rho}{y^\eta}t)e^{-(\kappa_i+1)t}\text{d}t \cr
&= \frac{\pi}{2l\eta} \sum_{i=1}^l |\sin(\frac{2i-1}{2l}\pi)| \kappa_i^{\frac{2}{\eta}-1}y^2 e^{-\frac{\epsilon_M y^\eta (\kappa_i+1)}{\rho}}[\frac{\ln{( \nu+\epsilon_M)}}{\kappa_i+1}\cr
&-\frac{1}{\kappa_i+1}e^{\frac{(\nu+\epsilon_M)(\kappa_i+1)y^\eta}{\rho}} \text{Ei}(-\frac{(\nu+\epsilon_M)(\kappa_i+1)y^\eta}{\rho})].
\end{aligned}
\end{equation}

Hence, based on (\ref{Q2:151}) and applying the Gauss-Chebyshev integration, ${\cal{L}}_5 = \int_0^D y{\cal{W}}_5\text{d}y$ can be calculated as
\begin{equation}\label{Q2:201}\footnotesize
\begin{aligned}
{\cal{L}}_5 &= \frac{\pi}{2l\eta} \sum_{i=1}^l |\sin(\frac{2i-1}{2l}\pi)| \kappa_i^{\frac{2}{\eta}-1} \{\frac{ \ln(\nu+\epsilon_M)\gamma(\frac{4}{\eta},\frac{(\kappa_i+1)\epsilon_M D^\eta}{\rho}) }{\eta(\kappa_i+1)[\frac{(\kappa_i+1)\epsilon_M}{\rho}]^{4/\eta}} \cr &-\frac{D\pi}{2q}\sum_{j=1}^q |\sin(\frac{2j-1}{2q}\pi)| x_j J_2(i,x_j)\},
\end{aligned}
\end{equation}
where $J_2(i,x)= \frac{x^2}{\kappa_i+1}e^{\frac{(\kappa_i+1)x^\eta \nu}{\rho}}\text{Ei}(-\frac{(\nu+\epsilon_M)(\kappa_i+1)x^\eta}{\rho})$.

Finally, substituting (\ref{Q2:18}) and (\ref{Q2:201}) into $R_{S,2}^U = \frac{8}{D^4\ln 2}({{\cal{L}}_4}-{{\cal{L}}_5})$, (\ref{Q2:21}) can be correspondingly achieved, and \emph{Theorem \ref{thm:5}} is completely proved.


%

\ifCLASSOPTIONcaptionsoff
  \newpage
\fi

\end{document}